\documentclass[11pt]{article}

\input{setup.sty}
\def\Hlarge{\cH_{\textit{large}}}
\def\trunc{{\rm trunc}}
\def\ln{\log}
\def\IR{\text{IR}}

\title{Modular flow and the area operator}

\author[a]{Xi Dong,}
\author[a]{Donald Marolf,}
\author[b]{and Pratik Rath}

\affiliation[a]{Department of Physics, University of California, Santa Barbara, CA 93106, USA}
\affiliation[b]{Department of Theoretical Physics, Tata Institute of Fundamental Research, Homi Bhabha Rd,
Mumbai 400005, India}

\emailAdd{xidong@ucsb.edu}
\emailAdd{marolf@ucsb.edu}
\emailAdd{pratik.rath@tifr.theory.res.in}

\abstract{In the context of the AdS/CFT correspondence, the Jafferis-Lewkowycz-Maldacena-Suh (JLMS) formula relates the boundary modular Hamiltonian to the HRT area operator and the corresponding bulk modular Hamiltonian. In standard discussions of a holographic code built from perturbative excitations on a fixed classical background, the $A/4G$ term dominates over the bulk modular Hamiltonian in the limit $G \to 0$. However, this setup is insufficient to describe the action of boundary modular flows that change the classical background. In contrast, our recent discussion of `large' codes provides a useful framework to obtain a JLMS formula that relates bulk and boundary modular flows acting on appropriate states in the code. Here, one sees that the bulk modular Hamiltonian is {\it not} in fact subdominant when the modular flow defined by one state acts on another state peaked around a sufficiently different classical background. Instead, the bulk modular Hamiltonian can be of order $1/G$ and can dominate over the $A/4G$ term. Nevertheless, we show that if the two states have the same classical background except for data conjugate to the HRT area then, under mild assumptions and over a large range of flow parameters, the modular-flowed state is described by a classical background given by the HRT-area flow alone.
}

\begin{document}

% Title Page
% \input{editionlegend.tex}
\maketitle

\section{Introduction}

% (fold)
\label{sec:intro}

The Jafferis-Lewkowycz-Maldacena-Suh (JLMS) \cite{Jafferis:2015del} relation forms a linchpin connecting many facets  of bulk reconstruction and quantum error correction in AdS/CFT.  This relation is valid at the level that includes the leading bulk quantum corrections, and for states in a `code subspace' of the CFT associated with appropriately semiclassical physics in the bulk.   Under such conditions, and for Einstein-Hilbert gravity with minimally-coupled matter, it reads
\begin{equation}\label{eq:JLMS}
P_{\text{code}} K_{R} P_{\text{code}}=\frac{\hat A(\gamma_R)}{4G}+ K_{R,\text{bulk}},
\end{equation}
where $R$ is a region in the CFT, $\hat A(\gamma_R)$ is the bulk operator defined by the area of the associated Hubeny-Rangamani-Takayanagi (HRT) surface $\gamma_R$ \cite{Ryu:2006bv,Ryu:2006ef,Hubeny:2007xt}, $P_{\text{code}}$ is the projection onto the chosen code subspace, and $K_R$,
$K_{R,\text{bulk}}$ are respectively the CFT (or `boundary') modular Hamiltonian in $R$ and the corresponding bulk modular Hamiltonian.\footnote{Here and below, factors of the identity operator on complementary subsystems are left implicit when appropriate. For example, the left-hand side of \eqref{eq:JLMS} could have been written more explicitly as $P_{\text{code}} \left( K_R \otimes \mathds{1}_{\bar R}\right) P_{\text{code}}$ where $\mathds{1}_{\bar R}$ is the identity on the CFT region complementary to $R$, and correspondingly for the right-hand side.}  We also remind the reader that a modular Hamiltonian $K_R$ is defined in terms of the corresponding density operator (or density matrix) $\tilde{\rho}_R$ on $R$ by\footnote{We use a tilde to indicate CFT (or `boundary') states, density operators, and Hilbert spaces, though not for the modular Hamiltonian.}
\begin{equation}
\label{eq:modularK}
K_R = -\log \tilde{\rho}_R.
\end{equation}
 For higher-derivative theories and/or non-minimal couplings, the term $\hat A/4G$ is replaced by the appropriate geometric entropy that appears in the corresponding generalization of the HRT formula. See Ref.~\cite{Dong:2013qoa,Dong:2017xht} and references thereto for studies of such corrections.

The result \eqref{eq:JLMS} follows directly \cite{Jafferis:2015del,Dong:2016eik} from the Faulkner-Lewkowycz-Maldacena (FLM) form \cite{Faulkner:2013ana} of the quantum-corrected HRT formulae
(see also~\cite{Engelhardt:2014gca})  and implies that operators in the bulk entanglement wedge can be reconstructed in $R$ via an operator-algebra quantum error-correcting code (OAQEC) \cite{Harlow:2016vwg}.
Additional bulk computations then show that the OAQEC encoding of fixed-area states is associated with an approximately flat entanglement spectrum, from which it follows, at the order considered here, that $K_R$ commutes with $P_{\text{code}}$ \cite{Dong:2018seb,Akers:2018fow,Dong:2019piw}.  As a result, when acting on a state in the code subspace, one may drop the $P_{\text{code}}$ projectors from \eqref{eq:JLMS} to write
\begin{equation}\label{eq:UPJLMS}
    K_{R} =\frac{\hat A(\gamma_R)}{4G}+ K_{R,\text{bulk}}.
\end{equation}
This unprojected form then allows one to exponentiate the JLMS formula and to thereby relate bulk and boundary modular flows. Modular flow in turn is very useful in providing explicit formulae for bulk reconstruction either directly \cite{Faulkner:2017vdd} or via the Petz map \cite{Cotler:2017erl,Chen:2019gbt}.

Taking the limit of a strictly semiclassical bulk often simplifies discussions of AdS/CFT by reducing the bulk calculation to one that is purely classical. We are thus motivated to better understand the JLMS relation \eqref{eq:JLMS} and the flow generated by $K_R$ in this limit.  Naively, one might expect that such a limit would allow one to ignore the $K_{R,\text{bulk}}$ term and thus to approximate $K_R$ by $\hat A/4G$ alone. Such a statement is often true for expectation values in the state from which $K_R$ was obtained (e.g., in the HRT formula the bulk entropy contribution is often subleading to the area contribution). 

However, we point out several ways in which one needs to be more careful with the operator version of the statement.  One such way is of course that the bulk modular Hamiltonian can still cause order-one changes in the state for order-one flow parameters $s$.  In addition, a
more striking issue arises when the modular flow acts on a boundary state $\td\t$ 
corresponding to a bulk classical background (which we denote by $\Phi_\t$) for which the entanglement wedge is distinct from that of the classical background $\Phi_\rho$ associated with the state $\tilde{\rho}$ defining $K_R$.
In that case, the bulk modular Hamiltonian is also generally of order $1/G$ and can dominate over the $\hat A/4G$ term when the two classical backgrounds are sufficiently different.

As we will see, the above statements are natural consequences of describing modular flows of appropriate semiclassical states using an unprojected JLMS formula of the form \eqref{eq:UPJLMS}. However, there are two issues that arise in attempting to make such statements precise. The first is that standard derivations of the exponentiated JLMS relation apply only in code subspaces built from perturbations around a given classical background \cite{Almheiri:2014lwa,Dong:2016eik}. In particular, this forbids any application to the above modular flows where the classical backgrounds $\Phi_\rho$ and $\Phi_\t$ are distinct. In addition, even for $\Phi_\rho=\Phi_\t$, one should recall that the Hamiltonian flow of the HRT area observable on classical phase space modifies the classical background \cite{Bousso:2020yxi,Kaplan:2022orm,Dong:2025orj}. As a result, the quantum flow generated by the right-hand side of \eqref{eq:UPJLMS} cannot preserve any such code subspace. While this is not necessarily a problem for infinitesimal flows, the usual JLMS setting simply does not allow discussion of such flows at order-one flow parameter $s$.

The second issue is that, again even for flows in which $s$ is only of order $G^0$, the fact that the above semiclassical setting necessarily involves exponentially small probabilities raises another technical problem for the usual derivation of either JLMS or its exponentiated form. The point is that the functions $\ln \rho$ and $\rho^{is}$ (for real $s$) are both singular at $\rho=0$, and thus are very sensitive to small errors when the eigenvalues of $\rho$ are small. In particular, when the eigenvalues of $\rho$ are already exponentially small due to semiclassical effects, there is a potential for other effects that are non-perturbatively small to have significant impact. This feature lies at the heart of the failure of JLMS described in \cite{Kudler-Flam:2022jwd}, and it can clearly also propagate into discussions of modular flow.

It turns out that one can deal with both of these issues using the so-called `large code' framework recently formulated in Ref.~\cite{JLMS}.  This framework was explicitly constructed to include an appropriate class of states describing a range of distinct backgrounds, and to thus provide a suitably generalized understanding of the bulk modular Hamiltonian. Ref.~\cite{JLMS} also established controlled versions of the exponentiated JLMS relation in this framework. We review the needed framework and assumptions in section~\ref{sec:JLMS} and apply the results in section~\ref{sec:JLMSflow}. We warn the reader that this material is somewhat technical; readers happy to assume the various technicalities may wish to skip ahead to the more physical discussion in section~\ref{sec:largeKbulk}.

In particular, using the framework of our large code it is easy to see that the naive expectation that $K_R$ be well approximated by $\hat A/4G$ alone generally fails to be correct. Such a statement would imply that all semiclassical states can be approximated by a single universal density operator \begin{equation}
\label{eq:naive}
    \tilde{\rho}^{naive}_R = {\cal N} e^{-\frac{\hat A}{4G}},
\end{equation} for an appropriate normalization constant ${\cal N}$.  But this is clearly false, as semiclassical states can generally be distinguished by their semiclassical one-point functions, and also by their boundary von Neumann or R\'enyi entropies.\footnote{As we review in section~\ref{sec:largeKbulk}, a concrete counter-example to \eqref{eq:naive} is that, for BTZ black holes and neglecting subleading contributions from boundary gravitons, the modular Hamiltonian is in fact quadratic in $\hat A$; see \er{eq:KRBTZ} and e.g.\ Ref.~\cite{Marolf:2020vsi}.}

We will argue in section~\ref{sec:largeKbulk} below that the discrepancy between \eqref{eq:naive} and the actual CFT state $\tilde{\rho}_R$ is nevertheless in accord with \eqref{eq:UPJLMS}.  In particular, the difference is given by the notion of $K_{R,\text{bulk}}$ defined on the large code, which can have eigenvalues of size $O(1/G)$ coming from the tail of the bulk entanglement spectrum.

However, the fact that $K_{R,\text{bulk}}$ can be important even in the semiclassical limit then raises another puzzle.  As we will review in section~\ref{sec:KandA}, Ref.~\cite{Bousso:2020yxi} provided highly non-trivial evidence that the flow generated by $K_R$ on the original boundary state $\td\r$ is semiclassically associated with a so-called boundary-condition-preserving (BCP) kink transformation.  This transformation is indeed the classical flow generated by $\hat A/4G$ \cite{Kaplan:2022orm,Dong:2025orj}, leaving no room for a nontrivial action of $K_{R,\text{bulk}}$.  We resolve this puzzle in section~\ref{sec:KandA} by showing that, although the operator $K_{R,\text{bulk}}$ remains non-trivial in the bulk semiclassical limit, if we are given two
states $\rho,\t$ that describe the same restricted classical background (in the sense defined in section~\ref{sec:overview}), then under appropriate conditions and for flow parameters $s=O(G^0)$, the modular-flowed states
\begin{equation}
\td\t_s :=    \tilde{\rho}_R^{is}\,\td\t\, \tilde{\rho}_R^{-is} = e^{-isK_R}\td\t e^{isK_R}
\end{equation}
do indeed describe classical backgrounds $\Phi_{\t,s}$ obtained from $\Phi_\t$ by the relevant BCP kink-transformation.

We close with a summary and brief remarks in section~\ref{sec:discussion}.
We comment on applications of our results to states that are superpositions of classical spacetimes as well as situations with non-trivial quantum extremal surfaces. Note that, throughout this work, we consider only leading-order bulk quantum corrections.  All higher-order such corrections will thus be neglected without further comments.

% section introduction (end)
\section{Large codes with JLMS}
\label{sec:JLMS}

The traditional discussion of the holographic duality as a quantum error-correcting code is based on picking a particular classical background and working with a small code built from nearby quantum excitations \cite{Almheiri:2014lwa,Dong:2016eik}. However, as discussed above, this is insufficient to understand the holographic dual of boundary modular flow, since the action of the area operator changes the classical background. For this purpose, a large code was constructed in Ref.~\cite{JLMS}. 

Furthermore, under appropriate conditions, this large code was shown to satisfy an approximate version of the exponentiated JLMS relation. This means that the boundary modular flow operator $\tilde{\rho}_R^{is}$ is well-approximated on this code by the `JLMS flow' operator $e^{-is\hat A_{cg}/4G}\rho_r^{is}$, where $\hat A_{cg}$ is a coarse-grained area operator defined on the large code (hence the subscript `cg') and $\rho_r^{is}=e^{-isK_{\rho_r}}$ is the corresponding bulk modular flow operator.\footnote{From this section onward, we use $r$ to refer to bulk quantities defined by the region $R$.  One may say that $r$ refers to the corresponding bulk entanglement wedge, though our bulk states may describe superpositions and/or mixtures of distinct classical backgrounds and thus the term `entanglement wedge' need not define a single domain of dependence.}  Here $K_{\rho_r}:=-\log\rho_r$ is the bulk modular Hamiltonian previously denoted by $K_{R,\text{bulk}}$. We will use the term `JLMS flow' to refer to the flow generated by $\hat A_{cg}/4G + K_{\rho_r}$.

We first review the large-code framework and then describe the exponentiated JLMS results. We refer the reader to \cite{JLMS} for further relevant details.

\subsection{Large-code framework}
\label{sec:overview}

We work throughout with UV cutoffs in both the CFT and the bulk.  In particular, we treat the boundary Hilbert space $\tH$ as a tensor product $\tH_R\otimes\tH_{\bar R}$.  As usual, we ignore potential issues associated with gauge invariance in the CFT that might otherwise obstruct such a factorization; such issues may be dealt with by replacing the tensor product with a sum of tensor products.

In the AdS/CFT context, the construction of \cite{JLMS} began by considering states that, at small $G$, describe small fluctuations about a given set of classical backgrounds\footnote{Large fluctuations of observables localized away from the HRT surface can also be accommodated under appropriate conditions \cite{JLMS}.} and `chopping' such states into pieces associated with some given set of windows $[A-\epsilon/2,A+\epsilon/2)$ of eigenvalues for the HRT area operator of interest. The various states associated with a common window were then collected together into what we may call a `small' code. In this sense, it followed the recipe for small codes given in \cite{Dong:2019piw}.  
Such chopping into windows is needed because the unprojected JLMS relation \eqref{eq:UPJLMS} holds to good approximation within each small code only when the associated entanglement spectrum is approximately flat, leading to an approximately $n$-independent code contribution to the boundary R\'enyi entropy. As we will see below, this chopping will also be the key ingredient that allows us to discuss a sufficiently large set of classical backgrounds to accommodate the desired modular flows.

Ref.~\cite{JLMS} then further decomposed each area-window sector so that every resulting small code Hilbert space\footnote{As in \cite{JLMS}, a code Hilbert space means what is often called the `logical' Hilbert space for a code, and a code subspace means the corresponding subspace of the `physical' Hilbert space (which is the CFT Hilbert space here).} $\cH^\a$ had an exact tensor-product structure
\begin{equation}
\label{eq:factoralpha}
\cH^\a = \cH_r^\a \otimes \cH_{\bar r}^\a
\end{equation}
associated with the bulk entanglement wedges $r, \bar r$ defined by the complementary pair of boundary regions $R, \bar R$. Here the index $\a$ labels the various small code Hilbert spaces $\cH^\a$, which we will also refer to as $\a$-sectors. Each small code is associated with an area-window that we call
\begin{equation}\la{eq:areaWindowalpha}
[A_\a-\e_\a/2,A_\a+\e_\a/2).
\end{equation}
Therefore, on each given $\cH^\a$ the geometric area operator $\hat A$ is approximated by the c-number $A_\a$ when the window width $\e_\a$ is small.
We allow the small code Hilbert spaces $\cH^\a$ to be possibly infinite-dimensional.

By a version of the standard path integral argument of \cite{Faulkner:2013ana}, an FLM relation of the following form then holds to good accuracy on each small code $\cH^\a$:
\begin{equation}
\label{eq:FLM}
S_{R} = \frac{A_{cg}^\a}{4G} + S_{\text{bulk}}.
\end{equation}
Here $A_{cg}^\a$ is a coarse-grained area for this small code, defined as\footnote{This $A_{cg}^\a$ was simply called $A^\a$ in \cite{JLMS}, which was taken there to include the $\log \e_\a$ correction.}
\begin{equation}
\label{eq:Aalphacg}
A_{cg}^\a:=A_\a+4G\log\e_\a,
\end{equation}
where the area-window width $\e_\a$ inside the logarithm is made dimensionless using an implicit reference scale. The logarithmic correction here is due to the area window \er{eq:areaWindowalpha} and can be understood as arising from the normalization of the fixed-area path integral: in deriving \er{eq:FLM} using the replica method \cite{Lewkowycz:2013nqa}, the $n$-replica path integral $Z_n$ gets a factor of $\e_\a$ when integrating the HRT area over the window, and the boundary entropy $S_R = -\lim_{n\to 1} \pa_n \log (Z_n/Z_1^n)$ receives an additive correction of $\log\e_\a$.
If the small code $\cH^\a$ is specified by constraining other bulk variables in addition to the area window, the definition of $A_{cg}^\a$ would receive additional logarithmic contributions.
An approximate projected JLMS relation then follows under appropriate conditions \cite{JLMS}.

However, 
as described in \cite{JLMS}, in order to exponentiate the JLMS relation it is important that each window-width $\epsilon$ vanish {\it faster} than $G$ in the limit $G\rightarrow 0$; i.e., we take $\epsilon = o(G)$. This has two important effects.  The first is simply that there will be many windows associated with each classical background.

To describe the second effect, we define the term `field data' to mean
the classical metric and matter fields together with the appropriate derivatives needed to specify a classical solution.  We also normalize the light local bulk fields, including the metric, so that the entire low-energy bulk action is $1/G$ times an action that remains finite as $G\rightarrow 0$.  With this normalization, the Poisson brackets among the corresponding field data are of order $G$.  The second effect is then that any field data whose Poisson bracket with $A$ is nonzero at order $G$ (including the `conjugate' of $A$) will have large quantum uncertainties in a state supported in a single window.  As a result, even in the limit $G\rightarrow 0$, states in a single window describe the original classical background only for field data whose Poisson brackets with $A$ vanish.  These include all field data in the interiors of either bulk entanglement wedge defined by $\gamma_R$ and, on $\gamma_R$ itself, functions of the induced metric, including $A$ itself.  We will therefore use the symbol $\bar \Phi$ (with an over-bar, and perhaps with additional decorations) to denote the $G\to 0$ value of such a restricted set of field data that Poisson-commute with $A$ (assuming their fluctuations vanish in this limit), and call it a `restricted classical background.'  We will use $\Phi$ to denote the $G\to 0$ value of the complete field data (which in particular include those data to the past and future of $\gamma_R$), and call it a `complete classical background' (again assuming the field data have vanishing fluctuations in this limit). We will construct our small code Hilbert spaces such that all states in a given $\cH^\a$ are associated with a restricted classical background which is completely determined by $\a$ and which we denote by $\bar \Phi_\a$.\footnote{It is possible to construct small code Hilbert spaces such that a given $\cH^\a$ contains multiple backgrounds, although we will not pursue this possibility here.}

For a given collection of restricted classical backgrounds $\bar \Phi_\a$ and area-windows \er{eq:areaWindowalpha}, Ref.~\cite{JLMS} then defined a large code Hilbert space $\Hlarge$ by simply taking the orthogonal direct sum of the associated small code Hilbert spaces:
\be\la{eq:Hlargedef}
\Hlarge := \bigoplus_\a \cH^\a.
\ee
The coarse-grained area operator $\hat A_{cg}$ on this large code is then defined as
\begin{equation}\la{eq:Acgdef}
\hat A_{cg}:=\bigoplus_\a A_{cg}^\a\mathds{1}^\a,
\end{equation}
with $A_{cg}^\a$ defined by \er{eq:Aalphacg}.
The large code comes with a linear encoding map $V: \Hlarge \to \tH$ into the boundary Hilbert space, which we will discuss in more detail in section~\ref{sec:JLMSflow}.

By construction, the space $\Hlarge$ now includes potentially different restricted classical backgrounds $\bar \Phi_\a$.  Moreover, so long as we include a sufficiently-complete set of area-windows, states describing a complete classical background $\Phi$ can then be constructed by superposing states in many different small code Hilbert spaces $\cH^\a$.

In particular, by allowing relative phases between the $\cH^\a$, we now also have access to an approximate `conjugate' variable to $A$ that we may call $\eta$.  It is useful to recall from \cite{Kaplan:2022orm} that (as anticipated in
\cite{Jafferis:2014lza,Ceyhan:2018zfg,Faulkner:2018faa,Bousso:2019dxk,Bousso:2020yxi,Lewkowycz:2018sgn,Chen:2018rgz}) at the classical level this $\eta$ is naturally taken to measure the relative boost angle between the two wedges; see figure~\ref{fig:areaflow}.  As a result it can also be described as quantifying a certain generalized notion of `time-shift' between boundary region $R$ and the complementary boundary region $\bar R$ (i.e., it generalizes the operators described in \cite{Thiemann:1992jj,Kastrup:1993br,Kuchar:1994zk,Harlow:2018tqv}), though it multiplies this time-shift by an analogue of the surface-gravity that would be defined for a bifurcate event horizon.\footnote{Note also that the operator $\hat \eta$ or equivalently the time shift operator $\hat \delta$ is not self-adjoint and, thus, the decomposition into its eigenstates is not precise \cite{Harlow:2018tqv}. In the classical description, the issue can be seen from the fact that there is a phase space boundary at $A=0$ which the flow generated by $\eta$ can reach at finite flow parameter. However, as long as we consider wavefunctions supported well away from $A=0$, we may nevertheless act on the wavefunction with $\hat \eta = -4G i \partial/\partial A$; any effects associated with the lack of a self-adjoint extension will be small.  Moreover, in our case, since we have a coarse-grained area operator, this $\hat \eta$ is also coarse-grained.}

\begin{figure}[t]
       \centering
        \includegraphics[width=0.6
        \textwidth]{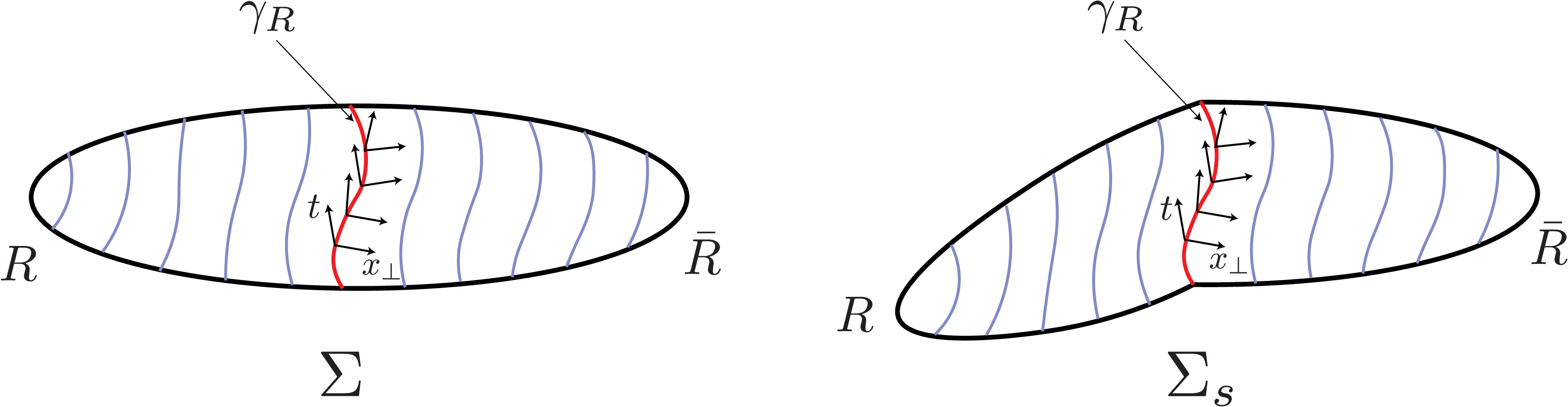}
        \caption{The classical Hamiltonian flow generated by the HRT area $A$ modifies the initial data on a Cauchy slice $\Sigma$ to that on slice $\Sigma_{s}$ by introducing a delta function in the extrinsic curvature component $K_{\perp \perp}$ where $x_{\perp}$ represents the intrinsic coordinate on $\Sigma$ that is orthogonal to the HRT surface $\gamma_R$. The Cauchy data in each separate entanglement wedge $r, \bar r$ remain unchanged.  In particular, the slice $\Sigma_{s}$ remains glued to the boundary at a boundary location that does not depend on $s$. For this reason, the action of this flow is known as a boundary-condition-preserving kink transformation \cite{Kaplan:2022orm}.}
        \label{fig:areaflow}
\end{figure}

\subsection{Exponentiated JLMS}
\label{sec:expJLMS}

The above large code is a setting where one can discuss both modular flow and the coarse-grained area flow on $\Hlarge$.  In particular, $\hat A_{cg}$ is a c-number on each small code $\cH^\a$. 
The associated area flow operator $e^{-is\hat A_{cg}/4G}$ on $\Hlarge$ thus acts only
by changing the relative phase between small codes with different $A_{cg}^\a$.

Under suitable conditions, Ref.~\cite{JLMS} established two forms of the exponentiated JLMS relation on the large code: an expectation-value bound (Theorem~4) and a stronger operator-norm bound (Theorem~6). To describe the relevant conditions, we first use the decomposition \er{eq:Hlargedef} to write a state $\rho$ on $\Hlarge$ in the associated block form with diagonal blocks $p_\a\rho^\a$ and possible off-diagonal blocks:
\begin{equation}\la{rdecom}
\rho =
\begin{pmatrix}
\rho^{\a\a} & \rho^{\a\b} & \cd\\
\rho^{\b\a} & \rho^{\b\b} & \cd\\
\cd & \cd & \cd
\end{pmatrix}
,\qquad
\rho^{\a\a}=p_\a\rho^\a,\qquad  p_\a := \tr \r^{\a\a},
\end{equation}
where the probability (or weight) $p_\a$ satisfies $\sum_\a p_\a = 1$, and $\r^\a$ is a normalized state on $\cH^\a$ proportional to $\r^{\a\a}$ when it is nonzero.
Tracing over $\bar r$ by definition removes the off-diagonal blocks and gives
\begin{equation}
\label{eq:rhor}
\rho_r=\bigoplus_\a p_\a\rho_r^\a.
\end{equation}
Here we have used \eqref{eq:factoralpha} to define the normalized density operator $\rho_r^\a$ on $\cH_r^\a$.  It follows that
\begin{align}
\rho_r^{is}
&=\bigoplus_{\a}
\left(p_\a\rho_{r}^\a\right)^{is},\\
\label{eq:Kbulkblocks}
K_{\rho_r}:=-\log\rho_r
&=\bigoplus_{\a}
\left[(-\log p_\a)\mathds{1}_r^\a
+K_{\rho_r^\a}\right],
\end{align}
where $\mathds{1}_r^\a$ is the identity on $\cH_r^\a$, and $K_{\rho_r^\a}:=-\log\rho_r^\a$.
Note also that we may use
the encoding map $V$ introduced above to define a boundary state $\td\r$ corresponding to $\r$ as the properly normalized density operator proportional to $V\r V^\dag$. In the following, we use a particular state $\r$ and its corresponding boundary state $\td\r$ to define the bulk and boundary modular Hamiltonians in the JLMS relation.

In a setting of this form, Theorem~4 of Ref.~\cite{JLMS} establishes exponentiated JLMS at the level of expectation values in certain test states $\s$ that behave well with respect to $\r$.\footnote{Throughout this paper, we reserve $\s$ for such test states and use $\t$ for the state on which modular flow acts.} To understand this, we first decompose the test state $\s$ on $\Hlarge$ as in \er{rdecom}, with
\begin{equation}\la{sdecom}
\s^{\a\a}=w_\a\s^\a,\qquad
w_\a := \tr \s^{\a\a},
\end{equation}
where the probability $w_\a$ will also be called the $\s$-weight of that sector. One important condition in Theorem~4 is a notion of relative log-stability in the small codes defined as follows. For a small code labelled by $\a$, we say that $\rho^\a$ is relatively log-stable with respect to $\sigma^\a$ if, for some $\epsilon_{tail}>0$,
\begin{align}
\label{eq:relativelogstabilitybulk}
\<(\rho_r^\a)^{-1}\>_{(\sigma_r^\a)^2}
&\leq\frac{1}{\epsilon_{tail}},\\
\label{eq:relativelogstabilityboundary}
\<(\tilde\rho_R^\a)^{-1}\>_{(\tilde\sigma_R^\a)^2}
&\leq\frac{1}{\epsilon_{tail}},
\end{align}
where angle brackets denote expectation values (as defined in Ref.~\cite{JLMS}), and $\tilde\rho_R^\a, \tilde\sigma_R^\a$ are the boundary reduced density operators corresponding to $\r^\a, \s^\a$, respectively.
These two inequalities control how strongly $\sigma_r^\a$ probes the small-eigenvalue tail of $\rho_r^\a$, and similarly on the boundary.
In particular, \eqref{eq:relativelogstabilitybulk} and \eqref{eq:relativelogstabilityboundary} respectively require the supports of $\sigma_r^\a$ and $\tilde\sigma_R^\a$ to be contained in the supports of $\rho_r^\a$ and $\tilde\rho_R^\a$.
If this relative log-stability condition is satisfied in all $\a$-sectors (except for a set carrying small total $\s$-weight), and assuming other appropriate conditions (as will be stated in section~\re{subsec:general} when we use the theorem), Theorem~4 gives an exponentiated JLMS relation in the form of a bound on the expectation value in $\s$ of the following operator:
\begin{equation}\la{JLMSdiffop}
V^\dag \tilde{\rho}_R^{is} V - e^{-is\frac{\hat A_{cg}}{4G}}\rho_r^{is},
\end{equation}
where the first term is the boundary modular flow operator pulled back to the large code Hilbert space by the encoding map $V$. We will use this theorem for the version of our discussion in section~\re{subsec:general}. In particular, it does not require the Hilbert spaces $\cH_r^\a$ to be finite-dimensional.

Theorem~6 of Ref.~\cite{JLMS} gives a stronger operator-norm version of exponentiated JLMS.  It bounds the operator norm of \er{JLMSdiffop}; equivalently, it bounds the expectation value of \er{JLMSdiffop} in all states $\s$ uniformly.  This version requires an (absolute) notion of log-stability in each small code: we say that $\rho^\a$ is log-stable if it is relatively log-stable with respect to every state $\sigma^\a$ on $\cH^\a$, with the same $\epsilon_{tail}$. In particular, satisfying the bulk side \er{eq:relativelogstabilitybulk} of the condition for every state $\sigma^\a$ is equivalent to the statement that all eigenvalues of $\rho_r^\a$ are at least $\epsilon_{tail}$. This necessarily requires $\cH_r^\a$ for each $\a$ to be finite-dimensional. We will use this theorem for the version of our discussion in section~\re{subsec:truncated}.

\section{Approximating boundary modular flow by the JLMS flow}
\label{sec:JLMSflow}

We wish to consider an AdS/CFT context in which we are given two CFT states and a region $R$ in the CFT.  We call the state whose reduced density operator defines the modular flow the reference state $\td\r$, and the state on which modular flow acts the target state $\td\t$. For the moment, let us focus on the case where both states are pure, writing $\td\r=|\tilde \psi \rangle \langle\tilde \psi |$ and $\td\t=|\tilde \phi\rangle\langle\tilde \phi|$. 
We then wish to understand the boundary modular flow $\tilde\rho_R^{is}|\tilde\phi\rangle=e^{-isK_{R,\psi}}|\tilde\phi\rangle$, where $K_{R,\psi}:=-\log\tilde\rho_R$.
Since this boundary modular flow depends only on $\tilde\rho_R$, we may choose to work with a convenient $|\tilde\psi\rangle$ among the purifications of $\tilde\rho_R$.

We assume that the states $|\tilde \psi \rangle, |\tilde \phi\rangle$ admit path integral constructions that can be translated into bulk language using sources only for low energy fields. We use $|\psi\rangle, |\phi \rangle$ as shorthand for the resulting semiclassical bulk path-integral descriptions until we construct their code representatives below, and we use these path integrals through their perturbative small-$G$ expansions.
In particular, the path integrals for $|\psi \rangle, |\phi\rangle$ do {\it not} initially include constraints restricting the HRT area to parametrically-small windows (or similar constraints on other field variables), though below we will impose such constraints to decompose them into area-window states.  In this sense, $|\psi \rangle, |\phi\rangle$ may be called `fully semiclassical,' as opposed to being semiclassical only in the union of the entanglement wedges $r$ and $\bar r$.

Anticipating the construction below, we write $\rho=|\psi\rangle\langle\psi|$ for the reference state. Our main statement of this section will be that for $s=O(G^0)$, we have\footnote{Since the logarithm of zero is ill-defined, the literal modular and JLMS flow operators are also ill-defined on subspaces where the underlying density operator has no support.  
As in Ref.~\cite{JLMS}, in such cases we suppose that the definition of each such operator has been extended to these subspaces in such a way that the full operator is unitary. Our assumptions below will ensure that the choice of extension does not affect the approximate relation \eqref{eq:modflowisJLMS}. For simplicity, we extend the bulk modular flow to be block diagonal in $\a$ so that it commutes with $\hat A_{cg}$.\label{foot:extension}}
\begin{equation}
\label{eq:modflowisJLMS}
e^{-isK_{R,\psi}}|\tilde\phi\rangle
\ap
V e^{-is\left(\frac{\hat A_{cg}}{4G}+K_{\rho_r}\right)}|\phi\rangle,
\end{equation}
where $\hat A_{cg}$ is the coarse-grained area operator \er{eq:Acgdef}.  We will construct a large code, its encoding map $V$, and the bulk reduced state $\rho_r$ in order to establish \eqref{eq:modflowisJLMS}.

The first version of our argument is given in section~\ref{subsec:general} below.  This version uses the bulk path integral to construct a natural large code
adapted to $|\td\psi\rangle$ and applies exponentiated JLMS at the level of expectation values to establish \eqref{eq:modflowisJLMS} for target states $|\td\phi\rangle$, under an appropriate `alignment' condition with respect to $|\td\psi\rangle$ as specified below.  Section~\ref{subsec:truncated} then gives an optional procedure for truncating each small code Hilbert space to a finite-dimensional space so that the operator-norm version of exponentiated JLMS can instead be applied to establish the truncated analogue of \eqref{eq:modflowisJLMS}. While the truncation introduces a degree of arbitrariness into the construction, it then provides a sense in which \eqref{eq:modflowisJLMS} holds for all choices of target state in the truncated Hilbert space, without requiring any notion of `alignment' with $|\td\psi\rangle$.

Although we focus below on pure reference and target states, our results also extend to mixed states under similar assumptions, as discussed below.

\subsection{An untruncated construction}
\label{subsec:general}

We construct an appropriate code adapted to the bulk path integral associated with $|\td\psi\rangle$ as follows.

First choose a complete set of non-overlapping area windows
\begin{equation}
\label{eq:areaWindows}
[A_\a-\e_\a/2,A_\a+\e_\a/2),
\end{equation}
labeled by some index $\a$. For each $\a$, consider a constrained bulk path integral that has the same sources as the original path integral defining $|\psi\rangle$, together with a constraint that the HRT area lie in the corresponding area window. The perturbative small-$G$ expansion of this constrained path integral defines an unnormalized state that we may write as $\sqrt{p_\a}|\psi^\a\rangle$, where $|\psi^\a\rangle$ is normalized and $\sum_\a p_\a=1$. The constrained saddle determines a restricted classical background $\bar\Phi_\a$, and the perturbative expansion defines $|\psi^\a\rangle$ as an element of a perturbative Hilbert space $\cH^\a$ of gravitons and matter fields on $\bar\Phi_\a$, with a fixed bulk UV cutoff. Perturbatively varying the sources gives the other states in $\cH^\a$.

Each perturbative Hilbert space has the tensor-product structure
\begin{equation}\la{eq:Hafactor}
\cH^\a=\cH_r^\a\otimes\cH_{\bar r}^\a,
\end{equation}
or a direct sum of such tensor products. If it is such a direct sum labeled by some index $\mu$, we refine the index $\a$ to include $\mu$, and \er{eq:Hafactor} holds for the refined $\a$. In that case, distinct $\a$-sectors may have the same associated area window.

For each such perturbative Hilbert space $\cH^\a$, we define a small code, with $\cH^\a$ identified with its code Hilbert space. The perturbative holographic dictionary defines an encoding map $V_\a:\cH^\a\rightarrow\tH$ into the CFT Hilbert space.

We then define the large code by taking the direct sum of this (possibly infinite) set of small codes:
\begin{equation}
\Hlarge:=\bigoplus_\a\cH^\a,
\qquad
\hat A_{cg}:=\bigoplus_\a A_{cg}^\a\mathds{1}^\a.
\end{equation}
Here the eigenvalues $A_{cg}^\a$ of the coarse-grained area operator $\hat A_{cg}$ are defined in \eqref{eq:Aalphacg}.  The large-code encoding map $V:\Hlarge\to\tH$ is defined by combining the encoding maps $V_\a$ for the small codes.  Because the set of windows is complete, the constrained path integrals together reconstruct the original bulk path integral perturbatively in $G$. We therefore define the reference state on $\Hlarge$ by
\begin{equation}
|\psi\rangle=\bigoplus_\a\sqrt{p_\a}\,|\psi^\a\rangle,
\end{equation}
whose encoding reproduces the prescribed boundary reference state $|\tilde\psi\rangle$ up to normalization:
\begin{equation}
V|\psi\rangle\propto|\tilde\psi\rangle.
\end{equation}
Obtaining the prescribed reference state $|\tilde\psi\rangle$ exactly as an encoded state is important because a small change in $|\tilde\psi\rangle$ need not give a small change in the boundary modular flow $\tilde\rho_R^{is}$ when $\tilde\rho_R$ has very small eigenvalues.  This is why the construction is adapted to $|\td\psi\rangle$, rather than treating the two states symmetrically.

Having defined the above large code, we now consider situations in which the desired boundary target state $|\td\phi\rangle$ is at least approximately an encoded state -- equivalently, there exists a state in $\Hlarge$ (which we may call $|\phi\rangle$) whose encoding $V|\phi\rangle$ is approximately $|\td\phi\rangle$ up to normalization. A concrete way of understanding this requirement is to use the bulk path integral associated with $|\tilde\phi\rangle$ together with area-window constraints to define constrained states (similar to $|\psi^\a\rangle$ above). In general, these constrained states need not belong to the corresponding $\cH^\a$, since each $\cH^\a$ was defined to consist of perturbative states on the restricted background of $|\psi^\a\rangle$. We therefore restrict to cases in which these constrained states are in the corresponding $\cH^\a$ (which we may then write as $\sqrt{q_\a} |\phi^\a\rangle$), for all $\a$ except for a set carrying small total target-state weight.\footnote{As noted above, we may choose any convenient $|\tilde\psi\rangle$ among the purifications of $\tilde\rho_R$, a choice we now use to help satisfy this requirement.}
Discarding the small-weight set of $\a$-sectors just described and renormalizing the remaining weights so that $\sum_\a q_\a=1$, we define
\begin{equation}
|\phi\rangle=\bigoplus_\a\sqrt{q_\a}|\phi^\a\rangle\in\Hlarge.   
\end{equation}
Its encoded state $V|\phi\rangle$ is approximately the desired $|\td\phi\rangle$, since we assume that the holographic map is sufficiently isometric in the current context.
This approximation causes only a small error in \eqref{eq:modflowisJLMS}, since both the boundary modular flow and the JLMS flow are unitary.

Writing $\r=|\psi\rangle\langle\psi|$, we now use Theorem~4 of Ref.~\cite{JLMS}, which gives exponentiated JLMS at the level of expectation values in certain test states denoted by $\sigma$.
For reasons that will be clear momentarily, we apply the theorem at a fixed $s$ to the following four pure test states:\footnote{For any of the four $|\chi\rangle$ that happens to vanish, our desired \er{lgexp} holds automatically without needing to use the theorem.}
\begin{equation}\la{eq:statecombine}
\s_\chi=\fr{|\chi\rangle\langle\chi|}{\langle\chi|\chi\rangle},\qqu
|\chi\rangle = |\phi\rangle\pm U_s|\phi\rangle,\,\,\,\,
|\phi\rangle\pm iU_s|\phi\rangle,
\end{equation}
where $U_s$ is the JLMS flow operator:
\begin{equation}
\label{eq:coarseJLMSflow}
U_s:=
e^{-is\left(\frac{\hat A_{cg}}{4G}+K_{\rho_r}\right)}
=e^{-is\frac{\hat A_{cg}}{4G}}\rho_r^{is}.
\end{equation}
Recall that $\r^\a$ is determined from $\r$ by \er{rdecom}, and $\s_\chi^\a$ is defined similarly.
In order to use the theorem, we assume for each of the test states $\s_\chi$ above that in every relevant small code labeled by $\a$, the encoding map $V_\a$ acts on $\rho^\a$ and $\s_\chi^\a$ as an approximate isometry, that the R\'enyi FLM relation at $n=1+is$ holds for mixtures of $\rho^\a$ and $\s_\chi^\a$, and that $\rho^\a$ is relatively log-stable with respect to $\s_\chi^\a$ in the sense of \eqref{eq:relativelogstabilitybulk} and \eqref{eq:relativelogstabilityboundary}.  We also need to assume that distinct small code subspaces are approximately orthogonal on both $R$ and $\bar R$ (in the precise exponentiated sense required by the theorem, so they remain approximately distinguishable on either boundary subsystem), and that $\s_\chi$ is aligned with $\rho$ (meaning that $\s_\chi$ assigns only a small total probability to $\a$-sectors in which $\rho$ has a small probability).

With these assumptions, the theorem gives the following exponentiated JLMS result for each choice of $|\chi\rangle$ in \er{eq:statecombine}:
\begin{equation}
\label{lgexp}
\left|
\langle V\chi|e^{-isK_{R,\psi}}|V\chi\rangle
-\langle\chi|U_s|\chi\rangle
\right|\leq \varepsilon \langle\chi|\chi\rangle \leq 4\varepsilon,
\end{equation}
where the small parameter $\varepsilon$ collects the various error terms appearing in the theorem, and in the last step we used $\langle\chi|\chi\rangle \leq 4$ due to the unitarity of $U_s$.  Importantly, this expectation-value result does not require $\cH_r^\a$ to be finite-dimensional.

To turn \eqref{lgexp} into our desired statement \er{eq:modflowisJLMS} about the flows of $|\phi\rangle$, we now apply the complex polarization identity
\begin{equation}
    \langle \alpha|\beta \rangle = \frac{1}{4}\left(\bigl\||\alpha\rangle + |\beta\rangle\bigr\|^2 - \bigl\||\alpha\rangle - |\beta\rangle\bigr\|^2 - i\bigl\||\alpha\rangle + i|\beta\rangle\bigr\|^2 + i\bigl\||\alpha\rangle - i|\beta\rangle\bigr\|^2 \right),
\end{equation}
to the difference of the two sesquilinear forms in \eqref{lgexp}, with $|\alpha\rangle=U_s|\phi\rangle$ and $|\beta\rangle=|\phi\rangle$.  This gives
\begin{equation}
\left|
\langle VU_s\phi|e^{-isK_{R,\psi}}|V\phi\rangle
-\langle U_s\phi|U_s|\phi\rangle
\right|
\lesssim\varepsilon.
\end{equation}
Since $e^{-isK_{R,\psi}}$ is unitary and $V$ acts on the states \er{eq:statecombine} as an approximate isometry, it follows that
\begin{equation}\la{eq:modflowisJLMSvec}
\bigl\|
e^{-isK_{R,\psi}}V|\phi\rangle
-VU_s|\phi\rangle
\bigr\|^2
=\bigl\|V|\phi\rangle\bigr\|^2
+\bigl\|VU_s|\phi\rangle\bigr\|^2 -2\operatorname{Re}\langle VU_s\phi|e^{-isK_{R,\psi}}|V\phi\rangle
\lesssim\varepsilon
\end{equation}
after enlarging $\varepsilon$ to include the approximate-isometry errors. Since $V|\phi\rangle\ap|\tilde\phi\rangle$, the result \er{eq:modflowisJLMSvec} establishes the desired statement \eqref{eq:modflowisJLMS}.

The result also extends to a mixed target state. Let us write
\begin{equation}
\td\t=\sum_i\lambda_i|\td\phi_i\rangle\langle\td\phi_i|,
\qquad
\t=\sum_i\lambda_i|\phi_i\rangle\langle\phi_i|,
\end{equation}
where $V|\phi_i\rangle\ap|\td\phi_i\rangle$. If the assumptions above hold for the four test states built from each $|\phi_i\rangle$ with a small weighted sum of the resulting errors, then convexity of the trace norm gives
\begin{equation}\la{eq:modflowisJLMSmixed}
\tilde\rho_R^{is}\td\t\tilde\rho_R^{-is} \ap VU_s\t U_s^\dag V^\dag,
\end{equation}
where the approximate equality holds with respect to the trace norm.

Having established the coarse-grained result \eqref{eq:modflowisJLMS} and its mixed-state extension, we now make some additional comments. We expect a fine-grained version of \eqref{eq:modflowisJLMS} to hold as well. In this fine-grained version, $\hat A_{cg}$ in \eqref{eq:modflowisJLMS} would be replaced by the geometric HRT area operator $\hat A$, and the bulk modular Hamiltonian $K_{\rho_r}$ would similarly be replaced by a fine-grained version acting on a suitable semiclassical bulk Hilbert space in which $\hat A$ may take continuous values. Thus the right-hand side of \eqref{eq:modflowisJLMS} would be replaced by a fine-grained JLMS flow followed by $V$ (interpreted as the bulk-to-boundary map).
We give some evidence in Appendix~\ref{app:sectorcoarsegraining} that this fine-grained JLMS flow is approximated by the coarse-grained version. Together with \eqref{eq:modflowisJLMS}, this provides evidence for the fine-grained version of \eqref{eq:modflowisJLMS}, although we will not rely on it in establishing the results of this paper.

For the moment, we simply note that after using \er{eq:Kbulkblocks}, the generator of the coarse-grained JLMS flow is
\begin{equation}
\frac{\hat A_{cg}}{4G}+K_{\rho_r}
=\bigoplus_\a
\left[
\left(\fr{A_{cg}^\a}{4G}-\log p_\a\right)\mathds{1}_r^\a
+K_{\rho_r^\a}
\right].
\end{equation}
The first term in square brackets is central in each $\a$-sector. Replacing this part of the generator by its fine-grained limit only causes a small error in \eqref{eq:modflowisJLMS} for $s=O(G^0)$, as long as the area-window widths are $o(G)$ (except for a set of windows carrying small total target-state weight, as they only cause a small error regardless of window width).\footnote{Here we use the ``little $o$'' notation which means that the widths should be negligible in comparison with $G$ as $G\rightarrow 0$.} To see this, we may use \er{eq:Aalphacg} to suggestively write
\begin{equation}\la{eq:Acgpa}
\fr{A_{cg}^\a}{4G} - \log p_\a = \fr{A_\a}{4G} - \log \fr{p_\a}{\e_\a}.
\end{equation}
The first term on the right-hand side is determined by the midpoint value $A_\a$ of the area window, which is close to the geometric HRT area $A$. The second term is determined by $p_\a/\e_\a$, which approaches the probability density in the fine-grained limit (when the limit exists). Thus as long as the area windows are narrow, replacing \er{eq:Acgpa} by its fine-grained limit only causes a small error in \eqref{eq:modflowisJLMS},\footnote{This is another way of understanding the $\log\e_\a$ contribution to the coarse-grained area \er{eq:Aalphacg}.} as can be seen from the standard integral formula for the difference of two unitary flows generated by Hamiltonians $H_1$ and $H_2$:
\begin{equation}\label{eq:unitaryflowdifference}
    U_1(s)-U_2(s) = -i\int_0^s U_1(s-t)\,(H_1-H_2)\,U_2(t)\,dt,\qquad
    U_i(s) := e^{-iH_i s}.
\end{equation}

\subsection{Optional truncation and operator-norm control}
\label{subsec:truncated}

We now describe an optional truncation that allows us to use the operator-norm version of exponentiated JLMS to establish a truncated analogue of \eqref{eq:modflowisJLMS}.  We then discuss when the corresponding JLMS flow is also close to the untruncated one.  Throughout, we hold fixed the boundary reference state $|\td\psi\rangle$ (and hence the boundary modular flow it defines).

For each small code labelled by $\a$, choose a finite-dimensional factorized subspace\footnote{The finite-dimensional truncation used here acts separately within each small code.  Ref.~\cite{RG} presented a different renormalization-group construction that reorganizes a large code into an IR large code.}
\begin{equation}\la{eq:Hatrunc}
\cH^\a_{\trunc}
=\cH^\a_{r,{\trunc}}\otimes\cH^\a_{\bar r,{\trunc}}
\subset\cH^\a ,
\end{equation}
and a normalized state $|\psi^\a_{\trunc}\rangle$ in this subspace that approximates $|\psi^\a\rangle$.
One may naturally ask whether such a truncated code Hilbert space allows us to describe the boundary modular flow defined by the original CFT state $|\tilde \psi\rangle$.  To argue that it does, 
we define a modification map $M_\a:\cH^\a_{\trunc}\rightarrow\cH^\a$ by
\begin{equation}
\label{eq:ModMa}
M_\a:=\iota_\a+
\left(|\psi^\a\rangle-|\psi^\a_{\trunc}\rangle\right)
\langle\psi^\a_{\trunc}|,
\end{equation}
where $\iota_\a:\cH^\a_{\trunc}\rightarrow\cH^\a$ denotes the natural inclusion.
The associated modified encoding map $V_{\a,\trunc}:\cH^\a_{\trunc}\rightarrow\tH$ is defined as
\begin{equation}
\label{eq:ModVa}
V_{\a,\trunc}:=V_\a M_\a,
\end{equation}
where $V_\a$ is the original encoding map for the untruncated small code.
By construction, $M_\a$ adds back the truncation error $|\psi^\a\rangle-|\psi^\a_{\trunc}\rangle$ so that
\begin{equation}
M_\a|\psi^\a_{\trunc}\rangle=|\psi^\a\rangle
\qqu \Rightarrow \qqu
V_{\a,\trunc} |\psi^\a_{\trunc}\rangle = V_\a|\psi^\a\rangle.
\end{equation}
Now define the truncated large code in direct analogy with the untruncated construction:
\begin{equation}
\cH_{\textit{large},{\trunc}}:=\bigoplus_\a\cH^\a_{\trunc},
\qquad
\hat A_{cg,{\trunc}}:=\bigoplus_\a A_{cg}^\a\mathds{1}_{\trunc}^\a.
\end{equation}
Its encoding map $V_{\trunc}:\cH_{\textit{large},{\trunc}}\to\tH$ is defined similarly by combining the $V_{\a,\trunc}$.
We take the reference state on this truncated large code to be $|\psi_{\trunc}\rangle$, with the same probabilities $p_\a$ as the untruncated reference state $|\psi\rangle$:
\begin{equation}
|\psi_{\trunc}\rangle
:=\bigoplus_\a\sqrt{p_\a}\,|\psi^\a_{\trunc}\rangle,
\qquad
\rho_{\trunc}:=|\psi_{\trunc}\rangle\langle\psi_{\trunc}|.
\end{equation}
It follows immediately that
\begin{equation}
V_{\trunc}|\psi_{\trunc}\rangle
=V|\psi\rangle
\end{equation}
which is precisely the original boundary state $|\tilde\psi\rangle$ after being properly normalized.  Thus the boundary modular flow it defines remains exactly unchanged.

We assume that the truncations are chosen so that there is also a normalized $|\phi_{\trunc}\rangle\in\cH_{\textit{large},{\trunc}}$ whose image under $V_{\trunc}$ approximates $|\tilde\phi\rangle$.  The truncated analogue of \eqref{eq:modflowisJLMS} that we wish to establish is
\begin{equation}
\label{eq:modflowisJLMStrunc}
e^{-isK_{R,\psi}}|\tilde\phi\rangle
\ap
V_{\trunc}e^{-is\left(\frac{\hat A_{cg,{\trunc}}}{4G}+K_{\rho_{{\trunc},r}}\right)}
|\phi_{\trunc}\rangle.
\end{equation}

To apply Theorem~6 of Ref.~\cite{JLMS}, we assume that the truncated code satisfies its hypotheses.  In particular, in each truncated small code labeled by $\a$, the theorem requires that the modified encoding map $V_{\a,\trunc}$ be an approximate isometry, that the R\'enyi FLM relation at $n=1+is$ hold for all states, and that $\rho_{\trunc}^\a$ be log-stable.  It also requires that distinct truncated small code subspaces be approximately orthogonal on both $R$ and $\bar R$, and that $\rho_{\trunc}$ be sufficiently smooth.  Here orthogonality and smoothness are understood in the precise exponentiated senses required by the theorem; see \cite{JLMS} for details.  With these assumptions, the theorem gives the following operator-norm bound:
\begin{equation}
\left\|
V_{\trunc}^\dagger e^{-isK_{R,\psi}}V_{\trunc}
-e^{-is\left(\frac{\hat A_{cg,{\trunc}}}{4G}
+K_{\rho_{{\trunc},r}}\right)}
\right\|_\infty
\leq\varepsilon,
\label{eq:opnormresult}
\end{equation}
where the small parameter $\varepsilon$ collects the various error terms appearing in the theorem.
Unlike the expectation-value result in section~\ref{subsec:general}, this operator bound can be applied to any state in the truncated large code.  Applying it to $|\phi_{\trunc}\rangle$, we find
\begin{equation}
 \left\|
V_{\trunc}^\dagger e^{-isK_{R,\psi}}V_{\trunc}|\phi_{\trunc}\rangle
-e^{-is\left(\frac{\hat A_{cg,{\trunc}}}{4G}+K_{\rho_{{\trunc},r}}\right)}
|\phi_{\trunc}\rangle
\right\|
\leq\varepsilon.
\end{equation}
Using $V_{\trunc}|\phi_{\trunc}\rangle\ap|\tilde\phi\rangle$, the approximate-isometry property of $V_{\trunc}$, and the unitarity of the boundary modular flow and the JLMS flow, we find \eqref{eq:modflowisJLMStrunc}.

This result immediately extends to a mixed target state. Since the operator-norm bound \er{eq:opnormresult} applies uniformly on the truncated large code, it directly gives the truncated analogue of the relation \er{eq:modflowisJLMSmixed} for a mixed target state.

The same idea also works, in either the untruncated or truncated setting, when the prescribed boundary reference state $\td\r$ is an appropriate mixed state. Suppose that $\td\r$ is at least approximately in the code subspace, i.e., it is close to the encoded image of a state on $\Hlarge$. Then as long as the code contains enough states to represent the support of $\td\r$, we may use a generalized modification map similar to the one above to change the encoding so that $\td\r$ is exactly an encoded state. The argument goes through as long as the modified encoding map satisfies the assumptions of the relevant exponentiated-JLMS theorem.

One may also ask whether the JLMS flow in the truncated code is close, as a bulk vector, to the original untruncated JLMS flow.  Writing $M:=\bigoplus_\a M_\a$, the question is whether
\begin{equation}
\label{eq:truncflowcomparison}
M e^{-is\left(\frac{\hat A_{cg,{\trunc}}}{4G}+K_{\rho_{{\trunc},r}}\right)}
|\phi_{\trunc}\rangle
\approx
e^{-is\left(\frac{\hat A_{cg}}{4G}+K_{\rho_r}\right)}|\phi\rangle
\end{equation}
holds. This need not hold for a generic truncation, since $x^{is}$ is sensitive to changes near $x=0$.  A convenient choice that allows \er{eq:truncflowcomparison} to hold is to truncate in the Schmidt basis of $|\psi^\a\rangle$, retaining matched Schmidt subspaces on $r$ and $\bar r$, taking each $|\psi^\a_{\trunc}\rangle$ to be the normalized projection of $|\psi^\a\rangle$ onto $\cH^\a_{\trunc}$, and taking $|\phi_{\trunc}\rangle$ to be the normalized projection of $|\phi\rangle$ onto $\cH_{\textit{large},{\trunc}}$.
If $P^\a_{r,{\trunc}}$ denotes the orthogonal projection onto $\cH^\a_{r,{\trunc}}$, the corresponding retained probability $p_{\trunc\mid\a}$ and normalized reduced state are given by
\begin{equation}
p_{\trunc\mid\a}:=\tr_r(P^\a_{r,{\trunc}}\rho_r^\a),
\qquad
\rho_{{\trunc},r}^\a
=\frac{P^\a_{r,{\trunc}}\rho_r^\a P^\a_{r,{\trunc}}}{p_{\trunc\mid\a}}.
\end{equation}
On $\cH^\a_{r,{\trunc}}$, one then has
\begin{equation}
(\rho_{{\trunc},r}^\a)^{is}P^\a_{r,{\trunc}}
=p_{\trunc\mid\a}^{-is}P^\a_{r,{\trunc}}(\rho_r^\a)^{is}.
\end{equation}
Thus, in each $\a$-sector, the two modular-flow operators differ only by the phase $p_{\trunc\mid\a}^{-is}$ on $\cH^\a_{r,{\trunc}}$.  Note that for this choice of truncation, $\|M_\a-\iota_\a\|_\infty=\bigl\||\psi^\a\rangle-|\psi^\a_{\trunc}\rangle\bigr\|=\sqrt{2(1-\sqrt{p_{\trunc\mid\a}})}$.
At fixed $s$, if $1-p_{\trunc\mid\a}$ is uniformly small and the projection onto $\cH_{\textit{large},{\trunc}}$ discards only a small-norm component of $|\phi\rangle$, then these phases are close to one, $M$ is close to the natural inclusion, and \eqref{eq:truncflowcomparison} follows.  Note that these are sufficient conditions for \eqref{eq:truncflowcomparison} to hold, not additional assumptions of the exponentiated-JLMS theorem required for \eqref{eq:opnormresult}.

In Appendix~\ref{app:sectorcoarsegraining}, we describe a related but different reduction of the large code Hilbert space by further coarse-graining groups of nearby $\a$-sectors. In that context, the coarse-graining again changes the bulk states by a small amount, and we can use a similar modification map to correct the error to ensure that the original boundary state $|\tilde\psi\rangle$ remains exactly in the code subspace.

\section{Large contributions from the bulk modular Hamiltonian}
\label{sec:largeKbulk}

Large but rare fluctuations about a classical background are as much a part of the bulk quantum state as are typical fluctuations.  In the large code, their contribution to the bulk modular Hamiltonian is represented by the exact central term $(-\log p_\a)$ in \eqref{eq:Kbulkblocks}, in addition to the modular Hamiltonian $K_{\rho_r^\a}$ within each $\a$-sector.

For a semiclassical reference state, one expects large fluctuations to be exponentially suppressed in $1/G$. The $(-\log p_\a)$ contribution to the bulk modular Hamiltonian can then be of order $1/G$ and thus naturally compete with $\hat A_{cg}/4G$ in the JLMS flow.  In a regime where the suppression of large fluctuations is Gaussian in the leading semiclassical approximation, this contribution grows quadratically with the fluctuation $\delta A:=A-\langle A\rangle$ with a coefficient of order $1/G$ and can become much larger than $\hat A_{cg}/4G$.

A particularly clear example arises for two-boundary AdS-Schwarzschild black holes in the approximation that we neglect local bulk excitations, boundary gravitons, angular momentum, and any gauge charges.  It also arises in pure JT gravity, where such approximations are unnecessary.  Up to diffeomorphisms, there is then only a one-parameter family of such solutions labeled by the horizon area $A$.  However, not all diffeomorphisms acting on these geometries are gauge symmetries.  As a result, the solutions define a two-parameter phase space labelled both by $A$ and $\delta$, the relative time-shift between the two asymptotic boundaries.  Here we use the time-shift terminology of the Jackiw-Teitelboim (JT) gravity discussion in Ref.~\cite{Harlow:2018tqv}, though the result is true in general (see e.g.\ Refs.~\cite{Thiemann:1992jj,Kastrup:1993br,Kuchar:1994zk}).\footnote{Instead of $\delta$, it is often natural to consider the canonical conjugate to $A$ which (up to normalization) is $\eta$, the relative boost angle between partial Cauchy slices on either side of the bifurcate horizon mentioned in section~\ref{sec:overview}.} 

One may alternatively label the classical phase space by the energy $E$ and the above $\delta$, normalizing $\delta$ so that these variables are canonically conjugate. Since bulk excitations have been forbidden, the gravitational Gauss law requires the ADM energies associated with the two boundaries to agree. As a result, this $E$ can be taken to be the common ADM energy. In this context it is useful to note that, for such solutions, the horizon area $A$ can be expressed as a (theory- and dimension-dependent) function $A(E)$ that does not depend on $\delta$.  Below, we use $|E\rangle_{\text{bulk}}$ to denote eigenstates of this energy.

Let us now consider the thermofield double state
\begin{equation}
\label{eq:TFD}
|\text{TFD}\rangle = Z^{-1/2}_\beta \sum_E e^{-\beta E/2}|E\rangle_R |E\rangle_L,
\end{equation}
in the tensor product CFT$_L \ \otimes$ CFT$_R$ of left ($L$) and right ($R$) CFTs dual to some bulk theory asymptotic to global AdS$_{d+1}$.   Here $Z_\beta$ is the CFT partition function at inverse temperature $\beta$, chosen so that the black-hole saddle dominates in the bulk. As emphasized in \cite{Maldacena:2001kr}, the bulk dual of \er{eq:TFD} is defined by a gravitational path integral in which the asymptotic boundary is taken to be a `half-torus' $[0,1] \times S^{d-1}$ of length $\beta/2$; see figure~\ref{fig:TFD}.

\begin{figure}[h!]
       \centering
        \includegraphics[width=0.3
        \textwidth]{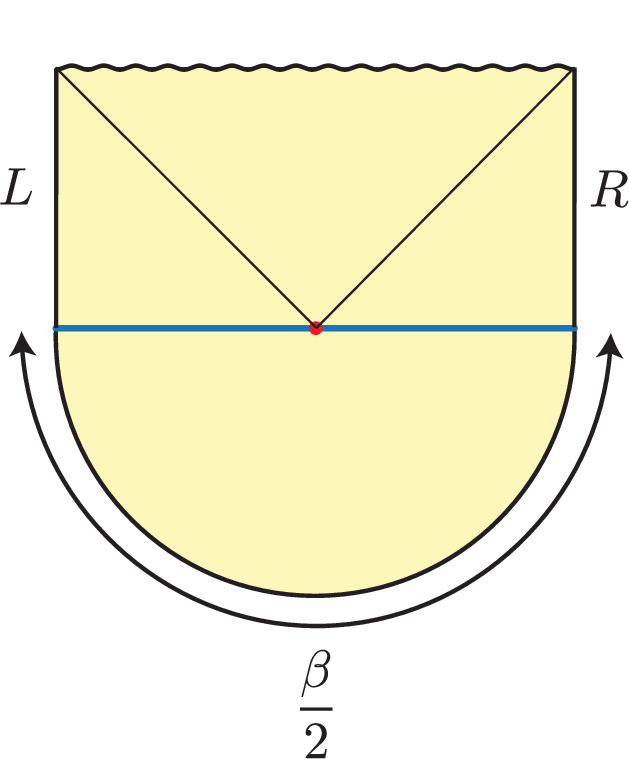}
        \caption{The thermofield double state is prepared in the bulk by a gravitational path integral with a `half-torus' asymptotic  boundary of length $\beta/2$. The dominant Euclidean saddle can be analytically continued at the moment of time symmetry (blue) to a Lorentzian two-sided black hole solution with asymptotic boundaries $L$ and $R$.}
        \label{fig:TFD}
\end{figure}

In terms of the CFT Hamiltonian $\hat H$,  we clearly have
\begin{equation}
\label{eq:KRTFD}
K_R = \beta \hat H + \log Z_\beta,
\end{equation}
so that \eqref{eq:UPJLMS} gives
\begin{equation}
\label{eq:KbulkTFD}
K_{\rho_r}
= \beta \hat H - \frac{\hat A}{4G} + \log Z_\beta,
\end{equation}
Here the HRT surface is the black hole horizon.  In writing \eqref{eq:KbulkTFD}, we remind the reader that throughout this work we consistently drop corrections at two-loop order and higher, and that such corrections were also dropped in writing \eqref{eq:UPJLMS}.

\begin{figure}[t]
      \centering
        \includegraphics[width=0.5
        \textwidth]{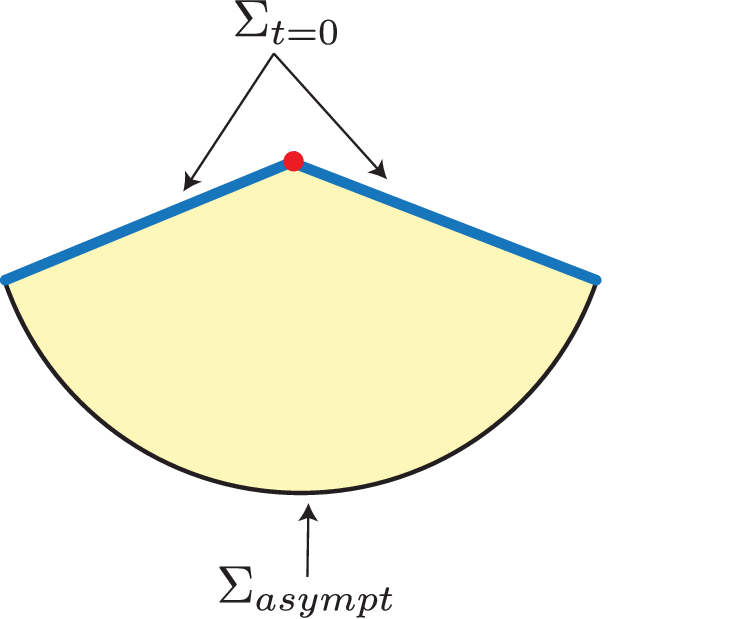}
        \caption{The wavefunction of the thermofield double state in the energy basis can be computed using a path integral with a half-torus asymptotic boundary of length $\beta/2$, denoted $\Sigma_{asympt}$, and an additional finite-distance boundary $\Sigma_{t=0}$ (blue). The induced metric on $\Sigma_{t=0}$ agrees with that on a constant-Killing-time slice of the AdS-Kruskal solution of given energy $E$. The saddle geometry generically involves a corner (red).}
        \label{fig:corner}
\end{figure}

It is important to note that \eqref{eq:KbulkTFD} can be reproduced by a direct path integral calculation in the bulk.  The individual terms in \eqref{eq:TFD} are characterized by a single number $E$ that gives the eigenvalue of the energy on both the left and right sides.  This is precisely the structure of the energy eigenstates associated with the quantization of the classical phase space truncated to the excitation-free two-sided Kruskal black holes described above.  As a result, after truncating to the above bulk phase space, the corresponding bulk wavefunction must be simply
\begin{equation}
\label{eq:TFDbulk}
|\text{TFD}\rangle_{\text{bulk}} = \int dE \ \mu(E) |E\rangle_{\text{bulk}}.
\end{equation}
The factor $\mu(E)$ may then be computed from the dominant saddle of the Euclidean bulk path integral with the same half-torus asymptotic boundary $\Sigma_{asympt}$ and an additional finite-distance boundary $\Sigma_{t=0}$ \cite{Harlow:2018tqv,Chua:2023srl}.  The induced metric on $\Sigma_{t=0}$ agrees with that on a constant-Killing-time slice of the two-boundary Schwarzschild-AdS solution of a given energy $E$; see \figref{fig:corner}.

The two boundaries $\Sigma_{asympt}$ and $\Sigma_{t=0}$ are required to meet in the usual way.  Notice, however, that the boundary conditions do not specify the extrinsic curvature of the surface $\Sigma_{t=0}$.  In particular, they allow saddle-point geometries in which $\Sigma_{t=0}$ has a `corner' at the Euclidean horizon as shown in \figref{fig:corner}. To compute $|\mu(E)|^2$, we glue this saddle to its conjugate. Recall now (see e.g.\ \cite{Carlip:1993sa}) that the Euclidean action $I_E$ of the resulting (doubled) geometry can be written in the canonical form
\begin{equation}\la{eq:action}
I_E = \int dt_E \left[\int d^dx \left(\dot{h}_{ij} \tilde \pi^{ij} +  N {\cal H} + N^i {\cal H}_i \right) + E \right] - \frac{A}{4G},
\end{equation}
in terms of the induced metric $h_{ij}$ on slices of constant Euclidean time $t_E$ and the (densitized) conjugate momentum $\tilde \pi^{ij}$. Since the constraints ${\cal H}, {\cal H}^i$ vanish on-shell and $\dot{h}_{ij}=0$ for static solutions, the action \er{eq:action} reduces to
\begin{equation}
I_E = \beta E- \frac{A}{4G}.
\end{equation}
As a result, at leading semiclassical order we find
\begin{equation}
\label{eq:bulkmu}
-\log |\mu(E)|^2 = \beta E - \frac{A}{4G} + \log Z_\beta,
\end{equation}
where the $\log Z_\beta$ term gives the correct normalization for the state \eqref{eq:TFDbulk}, as can be seen from \er{eq:TFD}.
See e.g.\ \cite{Harlow:2018tqv} and \cite{Marolf:2020vsi} for more detailed discussion of this result in the contexts of JT gravity and $(2+1)$-dimensional Einstein-Hilbert gravity (ignoring boundary gravitons).

The result \eqref{eq:bulkmu} reproduces \eqref{eq:KbulkTFD} as promised previously. To see this, we may view the bulk state \eqref{eq:TFDbulk} as a state in the ``direct-integral'' bulk Hilbert space $\int^\oplus dE\,\cH_{E,\text{bulk}}$,\footnote{This direct-integral Hilbert space is the continuous limit of the direct-sum $\Hlarge$ constructed in section~\ref{subsec:general}.  To follow the construction there strictly, we would choose a discrete set of area (or energy) windows \eqref{eq:areaWindows}; then $\Hlarge$ is a direct sum over these windows, with each summand spanned by the projection of \eqref{eq:TFDbulk} onto that window.  The resulting $\rho_r$ and $K_{\rho_r}$ would be coarse-grained versions of the expressions above.}
where each $\cH_{E,\text{bulk}}$ is a one-dimensional Hilbert space spanned by $|E\rangle_{\text{bulk}}$.  The bulk algebra $\cA_r$ consists of operators that act on $|E\rangle_{\text{bulk}}$ by multiplication with a function of $E$, i.e., all such operators are diagonal in the energy basis. The reduced state $\rho_r$, defined with respect to $\cA_r$ (and hence similarly diagonal in the energy basis), therefore acts by multiplication with $|\mu(E)|^2$ and is normalized with respect to the natural trace on $\cA_r$, $\tr_r\rho_r:=\int dE\,|\mu(E)|^2=1$, since then $\tr_r(\rho_r O)$ reproduces the expectation value of any $O\in\cA_r$ in the bulk state \eqref{eq:TFDbulk}. We emphasize that this trace is different from the Hilbert-space trace on the bulk Hilbert space. The modular Hamiltonian $K_{\rho_r}$ then acts by multiplication with $-\log|\mu(E)|^2$.  Equation~\eqref{eq:bulkmu} therefore gives \eqref{eq:KbulkTFD}.

Let us illustrate the important features of such examples by considering a nonrotating BTZ black hole.  Setting the AdS length to one, taking our thermofield double to be peaked at a black hole of area $A_0$ (so that  $\beta = 4\pi^2/A_0$), and using $E=A^2/(32\pi^2G)$,
we find that \eqref{eq:KRTFD} and \eqref{eq:KbulkTFD} take the form \cite{Marolf:2020vsi}
\begin{align}\la{eq:KRBTZ}
        K_R &= \frac{\hat A^2}{8 G A_0}+\log Z_\beta, \\
        K_{\rho_r} &= K_R-\frac{\hat A}{4G} = \frac{(\hat A-A_0)^2}{8 G A_0}+\log Z_\beta-\frac{A_0}{8G},\la{eq:KrhorBTZ}
\end{align}
which are valid when acting on states in the code subspace consisting of BTZ black holes of varying horizon areas. Here, $A_0 = 4\pi^2/\beta$ is the expectation value of $\hat A$ in the thermofield double state \er{eq:TFD} defined with a particular $\b$.\footnote{In 2+1 dimensions, $A$ is in fact the \textit{length} of the horizon, but we continue to refer to it as the horizon area.}  These expressions make manifest that $K_{\rho_r}$  can be larger than $\hat A/4G$ when acting on states with large $A/A_0$, but that its action becomes less important near $A=A_0$.

In the above discussion, we neglected all degrees of freedom other than the background geometry for simplicity.
But one may repeat the discussion in the context of a general large code of section~\ref{sec:JLMSflow}. In such a context, the bulk modular Hamiltonian $K_{\r_r}$ is given by \eqref{eq:Kbulkblocks}, which in particular contains a central part $(-\log p_\a)$. Focusing on this central part, the discussion then proceeds in direct parallel with the simple BTZ example above. Thus, as before, (the central part of) $K_{\r_r}$ can compete with or dominate over the $\hat A_{cg}/4G$ term when acting on target states whose restricted classical background is different from that of the reference state $\r$.

On the other hand, when the reference and target states have the same restricted classical background, the situation is different. In this case, even though the $K_{\rho_r^\a}$ term in the bulk modular Hamiltonian \eqref{eq:Kbulkblocks} generically generates nontrivial modular flow within an $\a$-sector (in the sense of causing order-one changes in the quantum state), we will nevertheless show in the next section that, under appropriate conditions, neither this $K_{\rho_r^\a}$ term nor the central part $(-\log p_\a)$ affects the target state's (complete) classical background under modular flow, so its leading-order change is given by the area flow alone.

\section{The area-flow description of boundary modular flow}
% (fold)
\label{sec:KandA}

We continue to consider the reference and target states $\td\r$ and $\td\t$ introduced in section~\ref{sec:JLMSflow}, under the conditions assumed there for boundary modular flow to be approximated by the JLMS flow. However, we will now seek additional conditions under which boundary modular flow defined by $\td\r$ is semiclassically well approximated by the HRT-area flow when acting on the target state $\td\t$.

One motivation for this study is that the conclusions of section~\ref{sec:largeKbulk} above may appear to be in tension with the results of Ref.~\cite{Bousso:2020yxi}. In particular, the previous section showed that, for a general target state concentrated far from the peak of the reference state, the central $(-\log p_\a)$ part of $K_{\rho_r}$ can compete with or dominate over $\hat A_{cg}/4G$. By contrast, Ref.~\cite{Bousso:2020yxi} found strong evidence that the effect of boundary modular flow on the classical bulk background is described by the boundary-condition-preserving kink transformation generated by $A/4G$.\footnote{Ref.~\cite{Bousso:2020yxi} focused on Connes cocycle (CC) flow, which in principle differs from modular flow by an additional flow generated by the vacuum modular Hamiltonian (and which improves the UV behavior in contexts where the HRT surface reaches the asymptotically AdS boundary).  However, in the cases studied there, this vacuum modular flow acted as a geometric symmetry which is easily undone, so the results can be read as evidence relating modular flow to area flow.}
However, one should also realize that, using our terminology, Ref.~\cite{Bousso:2020yxi} considered only cases in which the reference and target states are identical, so that their restricted classical backgrounds $\bar\Phi_\rho$ and $\bar\Phi_\t$ coincide.

We may thus resolve the above tension by showing more generally that, under the conditions below, agreement of the restricted backgrounds is sufficient for the flow generated by $K_{R,\psi}$ on the complete target-state background $\Phi_\t$ to be well approximated by the HRT-area flow. We emphasize that the complete backgrounds $\Phi_\rho$ and $\Phi_\t$ need not coincide in the regions to the future and past of the HRT surface. We remind the reader that our conventions for complete and restricted classical backgrounds were stated in section~\ref{sec:overview}.

For the reference state, we use $p_\a$ and $\r^\a$ as defined by \er{rdecom}. We similarly write the diagonal blocks of the target state as
\begin{equation}\la{tdecom}
\t^{\a\a}=q_\a\t^\a,\qquad
q_\a := \tr \t^{\a\a}.
\end{equation}
Here $q_\a$ is the target-state weight of the $\a$-sector.

For simplicity, throughout this section we take the area windows to have an equal width that is $o(G)$, except for a set of windows carrying small total target-state weight.\footnote{This equal-width assumption can be relaxed. For unequal widths, the argument in the rest of this section goes through with $A_{cg}^\a$ and $p_\a$ replaced by $A_\a$ and $p_\a/\e_\a$, respectively. In particular, statements about $K_{\rho_r}=\bigoplus_{\a}[(-\log p_\a)\mathds{1}_r^\a +K_{\rho_r^\a}]$ are understood as statements about $\bigoplus_{\a}[(-\log \frac{p_\a}{\e_\a})\mathds{1}_r^\a +K_{\rho_r^\a}]$ instead.}

The argument is then rather simple.  It is useful to start by recalling
that our present formalism splits the notion of a classical background $\Phi$ into two pieces.  The first piece we call the restricted classical background $\bar\Phi$, which determines the classical values of observables that Poisson-commute with the HRT area.  Such a restricted background is determined by the label $\a$ assigned to each small code.  This label $\a$ also specifies an HRT area, but information about quantities that do not Poisson-commute with the HRT area is encoded in the relative phases of states assigned to small codes with different $\a$ (but which still correspond to the same restricted background $\bar\Phi$).  Controlling the action of JLMS flow on the classical background is thus a matter of controlling the action of this flow on the probabilities and appropriate phases the flowed state assigns to each small code.

With this in mind, let us recall that, with the normalization described in section~\ref{sec:overview}, our field data have Poisson brackets of order $G$.  For example, the Poisson bracket of the local boost $\eta$ with $A$ is of order $G$, since $\eta$ is canonically conjugate to $A/4G$.  As a result, phases that are $o(G^{-1})$ do not affect the resulting classical background under modular flow.\footnote{We assume that such phases vary sufficiently regularly with $A$ that differentiation with respect to $A$ preserves their $o(G^{-1})$ scaling, and similarly for the estimates involving $K_{\rho_r^\a}$ given below.}

We first consider the $K_{\rho_r^\a}$ term in \eqref{eq:Kbulkblocks}, which generates modular flow within each $\a$-sector.  As a result, it clearly leaves invariant the restricted classical background $\bar\Phi_\t$ associated with the target state. We now argue that such flows also have no leading-order effect on the phases that specify the remainder of the target state's complete classical background.

In the truncated setting of section~\ref{subsec:truncated}, this follows from log-stability with a reasonably small $\e_{tail}$:
\begin{equation}
\lt\|K_{\rho_r^\a}\rt\|_\infty
\leq |\log\epsilon_{tail}|=o(1/G),
\end{equation}
where we have dropped a `trunc' subscript for simplicity.
The above condition ensures that $K_{\rho_r^\a}$ cannot produce a phase of order $G^{-1}$.  The restriction on $\epsilon_{tail}$ is rather weak: $\epsilon_{tail}$ may tend to zero as $G\to 0$, provided that $|\log\epsilon_{tail}|$ grows more slowly than $1/G$.

In the untruncated setting of section~\ref{subsec:general}, the promised control over the leading-order effect of $K_{\rho_r^\a}$ follows from
\begin{equation}\la{eq:Krrasubleadtau}
\sum_\a q_\a \< K_{\rho_r^\a} \>_{\t_r^\a} \lesssim \sum_\a q_\a\big[S(\t_r^\a)+|\log\epsilon_{tail}|\big] \leq S(\t_r)+|\log\epsilon_{tail}| =o(1/G),
\end{equation}
where the first inequality follows from relative log-stability,\footnote{Relative log-stability with respect to the four test states in \eqref{eq:statecombine} implies relative log-stability with respect to the state $\t^\a$, up to an inessential order-one reduction of $\epsilon_{tail}$. The first inequality then follows as in the proof of Theorem~1b of Ref.~\cite{JLMS}.}
and in the last step we imposed a mild condition on the bulk entropy $S(\t_r)$ (in addition to the same weak restriction on $\epsilon_{tail}$ as above). To see that \er{eq:Krrasubleadtau} is sufficient to control the effect of $K_{\rho_r^\a}$ as promised, we note that it ensures, at leading order, that the target state $\t$ does not probe eigenvalues of $K_{\rho_r^\a}$ that are of order $1/G$. In particular, for any fixed $c>0$, the projector onto the spectral subspace $K_{\rho_r^\a}\geq c/G$ is bounded by $G K_{\rho_r^\a}/c$, so the total weight of the target state $\t$ on such spectral subspaces is bounded by the expectation value of $\oplus_\a G K_{\rho_r^\a}/c$, which is only $o(1)$. Thus the target state -- up to this $o(1)$ tail -- does not probe $K_{\rho_r^\a}$ eigenvalues of order $1/G$.

Thus, in either setting, the modular flow generated by $K_{\rho_r^\a}$ within each $\a$-sector has no leading-order effect on the target state's classical background.
It follows that only the central $(-\log p_\a)$ term in \eqref{eq:Kbulkblocks} can affect the target state's classical background at order $G^{-1}$.  Its effect is determined by the probabilities $p_\a$ defined by the reference state and the probabilities $q_\a$ defined by the target state.

Let us suppose that the probabilities $p_\a$ defined by the reference state $\r$ are well approximated by a Gaussian distribution with standard deviation\footnote{When $\alpha$ describes more than a single parameter, the right object to discuss is the covariance matrix describing potentially different standard deviations in different directions.  Here we use one-dimensional language for simplicity, since every width has the same relevant $G^{1/2}$ scaling, and also because we have previously emphasized the one-parameter case.} $\Delta_\rho$. We take this standard deviation to be defined using an appropriate notion of distance $|\a-\a'|$ between sectors $\a, \a'$ defined by comparing the classical fields in the restricted backgrounds $\bar\Phi_\a,\bar\Phi_{\a'}$ without introducing any additional factors of $G$.  For example, when $\alpha$ is just an area-window label, we may simply take $|\a-\a'|=|A_\a-A_{\a'}|$.

In particular, supposing that the semiclassical state $\rho$ is peaked around sector $\alpha_\rho$, the Gaussian approximation to its probabilities $p_\a$ takes the form
\begin{equation}
\label{eq:gaussian}
\log\left(\frac{p_\a}{p_{\a_\rho}}\right) = -\frac{|\a-\a_\rho|^2}{2\Delta_\rho^2}\bigl[1+o(G^0)\bigr]
\end{equation}
within the regime
\begin{equation}
\label{eq:gaussregime}
\frac{|\a-\a_\rho|}{\Delta_\rho} = o(G^{-1/2}).
\end{equation}
Here the non-Gaussian corrections are assumed to vary sufficiently slowly with the classical fields that their Poisson brackets with the field data remain $o(G^0)$ throughout this regime.
Furthermore, since $\rho$ and $\t$ describe the same restricted classical background as $G\rightarrow 0$, for every sector $\a$ outside a set carrying negligible total target-state weight, the distance defined above gives $|\a-\a_\rho|\rightarrow 0$ as $G\rightarrow 0$.  Such sectors thus lie in the regime \eqref{eq:gaussregime}, and the associated probabilities $p_\a$ hence satisfy
\begin{equation}
-\log p_\a + \log p_{\a_\rho} = \log \(\frac{p_{\a_\rho}}{p_\a}\) \ap \frac{|\a-\a_\rho|^2}{2\Delta_\rho^2} = o(1/G).
\end{equation}

We thus find that up to an $\a$-independent part that produces only an overall phase, $(-\log p_\a)$ is negligible at order $1/G$ (throughout these sectors which carry nearly all the target-state weight) and its Poisson brackets with the field data are $o(G^0)$, so it generates no leading-order change in the classical background. Together with the control over $K_{\rho_r^\a}$ established above, this shows that at order $1/G$, modular flow by $K_{\rho_r}$ has no effect on the complete target-state background $\Phi_\t$.  As a result, under the JLMS flow the leading-order change in the background $\Phi_\t$ is generated entirely by $\hat A_{cg}/4G$.

Moreover, since the area windows were taken to have equal width (except for a set of windows carrying small total target-state weight), \er{eq:Aalphacg} then shows that, up to a small error, replacing $A_{cg}^\a$ by $A_\a$ changes this area flow only by an overall phase.  Moreover, on these windows, $A_\a$ differs from the geometric HRT area $A$ by at most half the window width, which is $o(G)$.  The unitary-flow identity \eqref{eq:unitaryflowdifference} therefore allows us to replace $A_\a$ by the geometric $A$ in this area flow, again up to an overall phase and a small error.  The leading-order change in the background $\Phi_\t$ is thus generated by $\hat A/4G$ and coincides with the boundary-condition-preserving kink transformation (see also figure~\ref{fig:areaflow} in section~\ref{sec:overview}).  

In particular, the state
\begin{equation}
\td\t_s
:=\tilde{\rho}_R^{is}\td\t\tilde{\rho}_R^{-is}
=e^{-isK_{R,\psi}}\td\t e^{isK_{R,\psi}}
\end{equation}
resulting from the flow will again be semiclassical in the sense described above for order-one values of the flow parameter $s$.
However, due to increasing dispersion in $\eta$ generated by the quadratic-in-$A$ term in $K_{\rho_r}$, the state $\td\t_s$ will generally cease to be semiclassical at parametrically large values of $s$.

Let us now return to the results of~\cite{Bousso:2020yxi}.  In that work, the reference and target states are identical, so their restricted classical backgrounds $\bar\Phi_\rho$ and $\bar\Phi_\t$ coincide. Our argument then shows that $\td\t_s$ is again semiclassical at order-one values of $s$ and describes a classical background $\Phi_{\t,s}$ related to $\Phi_\t$ by a boundary-condition-preserving kink-transformation.

We should also comment briefly on quantum corrections.  As already noted, the bulk modular Hamiltonian generally induces significant changes in the bulk quantum state.  A particularly important effect occurs near the HRT surface.   Since the area flow creates a kink while leaving the quantum state unaffected,  the action of the area-flow alone on the quantum state would destroy the local vacuum structure near the HRT surface.  We therefore remind the reader that, as argued in Ref.~\cite{Jafferis:2015del}, the action of the bulk modular Hamiltonian $K_{\rho_r}$ precisely repairs this local vacuum\footnote{It is perhaps also useful to note that 
the bulk modular flow also induces significant changes in the quantum state away from the HRT surface.
Even when acting on the original state $\rho$, such changes can be 
detected by two-sided correlation functions. In contrast, since the area operator commutes with bulk operators in either wedge, the area flow leaves such correlation functions invariant.}.
However, since we impose a UV cutoff on the bulk fields, we also note that the energy density associated with the destruction of the vacuum by the area-only flow is a dimension-dependent power of $\Lambda$. As a result, as long as we choose $\Lambda\ll l_{P}^{-1}$ where $l_P$ is the Planck length, the backreaction of this singularity can be consistently neglected at leading order in $1/G$. In this sense the bulk classical background obtained from bulk area-only flow continues to agree with that obtained from modular flow in the CFT even when quantum corrections are included.

\section{Discussion}
\label{sec:discussion}

The goal of this note was to clarify certain points regarding the use of the JLMS formula for semiclassical bulk states.  The large-code formalism of Ref.~\cite{JLMS} provides controlled forms of the exponentiated JLMS relation for appropriate semiclassical states in contexts that go beyond small perturbations around a single classical background.  We first used this formalism to identify conditions under which boundary modular flow is approximated by the JLMS flow.

We then asked when the leading-order effect of the JLMS flow on the classical background is generated by the area term $\hat A_{cg}/4G$.
In the above setting, the bulk modular Hamiltonian $K_{\r_r}$ is generally {\it not} negligible in comparison with the area term $\hat A_{cg}/4G$; in particular, the central $(-\log p_\a)$ part of $K_{\r_r}$ can be of order $1/G$.  Nevertheless, when the reference and target states have the same restricted classical background, and when the mild assumptions of section~\ref{sec:KandA} hold, the bulk modular flow has no leading-order effect on the complete classical background of the target state.  The boundary modular flow therefore acts on that background as the flow generated by $A/4G$, namely the BCP kink transform.

We close with several applications, extensions, and open directions.

\paragraph{Commutators of modular Hamiltonians:}
Recall that recent works introduced commutators of modular Hamiltonians as an entanglement measure for gapped $(2+1)$-dimensional systems \cite{Kim:2021gjx,Kim:2021tse} and subsequently studied them in $(1+1)$-dimensional CFTs and holography \cite{Zou:2022nuj}. In particular, the quantity studied was of the form
\begin{equation}
\label{eq:modcom}
    J(A,B,C)_\Omega:=i\langle [K_{AB},K_{BC}]\rangle_\Omega
\end{equation}
where the subscript $\Omega$ indicates that this expectation value is taken in the vacuum $|\Omega\rangle$.  The modular Hamiltonians are defined by this same vacuum state restricted either to the region $AB$ or to the region $BC$, where $A,B,C$ are contiguous intervals at some given time.

Let $S_{BC}(s)$ denote the entropy of $BC$ in the state $e^{-isK_{AB}}|\Omega\rangle$.  The first law of entanglement gives
\begin{equation}
J(A,B,C)_\Omega = \left.\frac{dS_{BC}(s)}{ds}\right|_{s=0}.
\end{equation}
In a holographic context, we may use our result to determine the complete classical background of the state $e^{-isK_{AB}}|\Omega\rangle$, and hence the HRT area for $BC$ that determines the leading $O(1/G)$ part of $S_{BC}(s)$.
Provided that the small-$G$ approximation is sufficiently uniform near $s=0$ so that the small-$G$ and small-$s$ limits may be interchanged, \eqref{eq:modcom} reduces at leading order to the change of $A_{BC}/4G$ under the flow generated by $A_{AB}/4G$, which is given by $i$ times the expectation value of the commutator of the HRT area operators $\hat A_{AB}/4G$ and $\hat A_{BC}/4G$ (or the appropriate generalization in theories with higher-derivative corrections).  This allows the results of \cite{Kim:2021gjx,Kim:2021tse,Zou:2022nuj} to be compared directly with the area-commutators (and generalizations thereof) computed in \cite{Kaplan:2022orm,Kaplan:2023oaj}.

\paragraph{Multiple semiclassical peaks:}
We focused on situations involving semiclassical states with a single classical background, but our results extend readily to cases with multiple semiclassical peaks.
Consider, for example, a reference state whose probabilities $p_\a$ are peaked at an $O(G^0)$ number of locations corresponding to well-separated restricted classical backgrounds $\bar\Phi_i$; such a state could be obtained as a superposition or mixture of semiclassical states with distinct classical backgrounds.
We may then repeat the construction of section~\re{sec:JLMSflow} and the argument of section~\re{sec:KandA}, finding that when the target state has a restricted classical background that coincides with one of the $\bar\Phi_i$, its complete classical background is given by the corresponding BCP kink transform.
Moreover, if the target state is a superposition or mixture of semiclassical states $\t_j$, each of which has a restricted classical background that coincides with one of the $\bar\Phi_i$, then the boundary modular flow applies the BCP kink transform to the complete classical background of each $\t_j$.

\paragraph{States without a single entanglement wedge:}
It is interesting to extend our results to situations studied in Ref.~\cite{Akers:2020pmf} which do not have a single well-defined entanglement wedge.
One such situation involves a ball of dust in a mixture of a pure state $\rho_1$ and a highly mixed state $\rho_2$ (see \figref{fig:incom}). The amount of dustball entropy in $\r_2$ was chosen so that the dominant quantum extremal surface (QES) for the two-interval region $R$ is the connected surface $\gamma_1$ in $\rho_1$ and the disconnected surface $\gamma_2$ in $\rho_2$. Although the two states have the same classical background, their mixture does not satisfy the naive QES prescription nor have a single well-defined entanglement wedge.

\begin{figure}[t]
       \centering
        \includegraphics[width=0.8
        \textwidth]{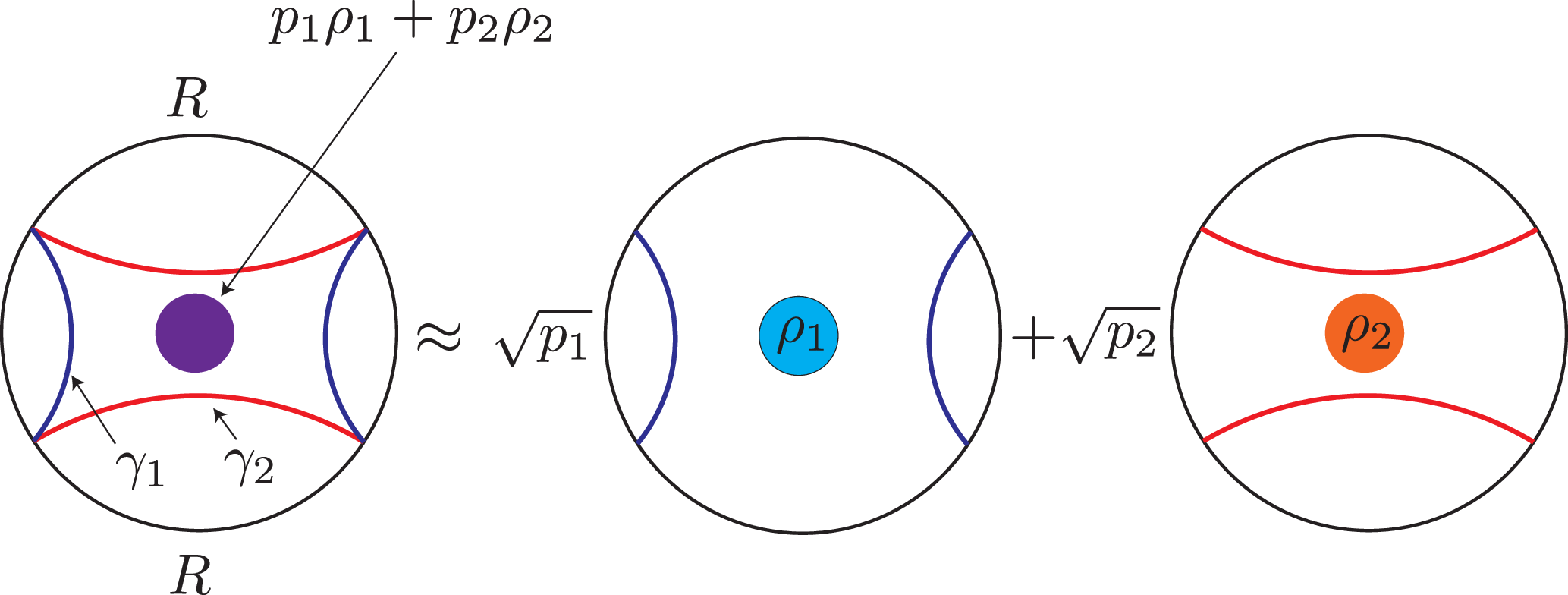}
        \caption{A ball of dust in AdS is chosen to be in a mixture of a pure state $\rho_1$ and a highly mixed state $\rho_2$. The two states have the same classical background, but the dominant QES for the two-interval region $R$ is $\gamma_1$ (blue) in $\rho_1$ and $\gamma_2$ (red) in $\rho_2$, so the mixture does not have a single well-defined entanglement wedge.}
        \label{fig:incom}
\end{figure}

As a warm-up, we first note that the construction of section~\re{sec:JLMSflow} and the result of section~\re{sec:KandA} apply directly to $\rho_1$ and $\rho_2$ separately, under the assumptions described there. For $\rho_2$, before using it as the reference state, it is useful to first purify it using an auxiliary system regarded as part of $\bar R$; alternatively, the purifier may be supplied by bulk degrees of freedom localized near $\bar R$. In the following, we will use $\r_2$ to denote (the density operator for) this purified state. The associated small codes are taken to contain the reference state $\r_2$ constrained to area windows for $\g_2$, together with perturbative excitations that leave unchanged the entangled state of the dustball and its purifier. The result of section~\re{sec:KandA} then holds: when a target state in the code has the same restricted classical background as $\r_2$, its complete classical background is given by the BCP kink transform at $\gamma_2$. The same result holds when using $\r_1$ as the reference state, except that the kink transform is applied at $\g_1$. Note that although $\rho_1$ and $\rho_2$ have the same complete classical background, the restricted backgrounds used in their corresponding code constructions are distinct because they are defined relative to $\gamma_1$ and $\gamma_2$, respectively.

Now we consider the situation where the reference state $\r$ is a superposition or mixture of the two states.
In this case, we slightly generalize the construction of section~\re{sec:JLMSflow} by taking the large code to be the direct sum of all small codes constructed around $\rho_1$ and all those constructed around $\rho_2$. Under the same assumptions as above, including approximate subsystem orthogonality between these small code subspaces, the result of section~\re{sec:KandA} again holds.
For example, when the target state is the reference state itself, the boundary modular flow is described at leading order by applying the BCP kink transform at $\gamma_i$ to each component associated with $\rho_i$. In particular, the two components associated with $\r_1, \r_2$ initially have the same complete classical background, but after some nonzero amount of boundary modular flow, the two components describe distinct kinked backgrounds.

\paragraph{Large bulk entropy and the IR area operator:}
While our study focused on situations involving classical extremal surfaces, large parts of our argument used only the large-code formalism of Ref.~\cite{JLMS} which holds in much greater generality. In particular, the formalism may be applied to contexts in which a QES defines the relevant entanglement wedge, regardless of whether the QES is a small perturbation of a classical extremal surface. This suggests applications of our results to settings such as late stages of black hole evaporation studied in \cite{Penington:2019npb,Almheiri:2019hni}.

In the example without a single entanglement wedge discussed above, the large dustball entropy changes which QES dominates, but it does not contribute at order $1/G$ to the generalized entropy of the dominant QES in either component: the dustball state is pure in $\rho_1$, while the dominant QES in $\rho_2$ excludes the dustball from the entanglement wedge.  We now focus on situations in which the bulk-entropy contribution to the generalized entropy of the dominant QES is itself of order $1/G$, so that its effective area can differ at leading order from its geometric area.  The large-code formalism may accommodate such situations by replacing the area operator $\hat A_{cg}$ with an IR (renormalized) area operator $\hat A_\IR$ that incorporates the large matter entropy.
In particular, Ref.~\cite{RG} constructed an appropriate large code in the IR that could satisfy the FLM and R\'enyi FLM formulas with such an $\hat A_\IR$.
Moreover, the generalized entropy can now be dominated by $\hat A_\IR/4G$ in such a code.
Under the same assumptions as above, the boundary modular flow is then described at leading order by the flow generated by $\hat A_\IR/4G$ rather than the geometric area.

It would be interesting to explore the flow generated by such an $\hat A_\IR/4G$ in future work. Indeed, a flow including stress-tensor shocks from the matter sector was briefly motivated for a QES in Ref.~\cite{Bousso:2020yxi}. Such an $\hat A_\IR/4G$ flow may also share features with the more general flows studied in Ref.~\cite{Dong:2025orj} which are generated by the geometric entropy in higher-derivative theories and can act nontrivially on matter as well as geometric initial data.

At the moment, we simply describe a toy example obtained by entangling $n(A)$ Bell pairs across $\gamma_R$, with $n(A)$ depending sufficiently smoothly on the geometric area $A$. Their contribution gives schematically
\begin{equation}
\frac{A_\IR}{4G} = \frac{A}{4G}+n(A)\log 2,\qquad
n(A)=n(A_0)+n'(A_0)(A-A_0)+\cdots,
\end{equation}
where $A_0$ is the peak of the probability distribution for $A$.  If $n(A_0)$ is of order $1/G$, then $A_\IR$ differs from $A$ at leading order.  Nevertheless, the constant term $n(A_0)$ produces only an overall phase and does not change the classical flow.  On the other hand, the linear term rescales the BCP kink transform by a factor $1+4G(\log 2)n'(A_0)$.  This has the same form as an effective renormalization of Newton's constant, as in Ref.~\cite{Susskind:1995qc}.

\paragraph{Bulk cutoffs and von Neumann algebras:}
Our use of the large-code formalism of Ref.~\cite{JLMS} required the use of a UV cutoff in the bulk as well as technical control from conditions such as log-stability.
It would be interesting to explore whether the language of type II and type III von Neumann algebras provides a cleaner and more elegant formulation of the results described here.

\paragraph{More general modular flow:}
One of the points stressed above was that a general discussion of the bulk dual of modular flow at order-one flow parameters requires a notion of a `large code subspace' that allows states that correspond to distinct classical solutions in the limit $G \rightarrow 0$.    Our discussion here was based on the large codes constructed in \cite{JLMS}.  However, one should note that that formalism allows only a limited set of such classical backgrounds, as small codes built around an arbitrary collection of classical backgrounds need not satisfy the requirement of approximate subsystem orthogonality mentioned in section~\ref{sec:JLMSflow}. In our construction, we also restricted attention to target states that are supported almost entirely in the same perturbative Hilbert spaces as the given reference state, $\a$-sector by $\a$-sector.  It would thus be of great interest either to find a more general formalism that allows one to study more generic superpositions of classical backgrounds and then check whether the conclusions here remain valid, or to have a better understanding of why in principle such a formalism might not exist.  

%~~~~~~~~~~~~~~~~~~~~~~~~~~~~~~~~~~~~~~~~~~~~~~~~~~~~~~~~~~~~~~~~~~~~~
\acknowledgments

%~~~~~~~~~~~~~~~~~~~~~~~~~~~~~~~~~~~~~~~~~~~~~~~~~~~~~~~~~~~~~~~~~~~~~

This material is based upon work supported by the Air Force Office of Scientific Research under Award Number FA9550-19-1-0360. This material is also based upon work supported by the U.S. Department of Energy, Office of Science, Office of High Energy Physics, under Award Number DE-SC0011702. The work of DM and PR was also supported by a grant from the Simons Foundation.
This work was supported in part by the Leinweber Institute for Theoretical Physics; and by the Department of Energy, Office of Science, Office of High Energy Physics under Award DE-SC0025293. We also acknowledge support from the University of California. This work is supported by the Department of Atomic Energy, Government of India, under Project Identification Number RTI-4012.

\appendix

\section{Stability of the JLMS flow under coarse-graining of nearby $\alpha$-sectors}
\label{app:sectorcoarsegraining}

Let us start with a collection $\mathcal G$ of nearby $\a$-sectors (whose area windows are assumed, for simplicity, to have approximately equal widths). The relevant part of our large code is thus
\begin{equation}
\bigoplus_{\a \in \mathcal G} \cH_r^\a \otimes \cH_{\bar r}^\a.
\label{eq:Gnbas}
\end{equation}
We can coarse-grain \eqref{eq:Gnbas} into a new small code with a factorized Hilbert space (which we may call $\cH_r^{\mathcal G} \otimes \cH_{\bar r}^{\mathcal G}$) while making only a small change in the JLMS flow.
This can be understood as combining nearby area windows into a single larger window. To see that we can do this, note that as long as the new area window is still sufficiently small, we expect that the factors $\cH_r^\a$ for $\a$ in the group can be identified with a common Hilbert space $\cH_r^{\mathcal G}$ (possibly after truncations of the type discussed in section~\ref{subsec:truncated}), and similarly for $\cH_{\bar r}^\a$:
\begin{equation}
\cH_r^\a \cong \cH_r^{\mathcal G},\qqu
\cH_{\bar r}^\a \cong \cH_{\bar r}^{\mathcal G}.
\end{equation}
We also expect semiclassical states $\rho_r^\a, |\phi^\a\rangle$ and the coarse-grained area $A_{cg}^\a$ to vary across the group by small amounts $\Delta_\rho$, $\Delta_\phi$, and $\Delta_A$ controlled by the width of the new area window.  Our coarse-graining procedure is then to choose an arbitrary $\a$-sector as a representative, using the component of the original states $\rho_r^\a, |\phi^\a\rangle$ (and coarse-grained area $A_{cg}^\a$) in this chosen $\a$-sector as the corresponding states (and coarse-grained area) in the new small code $\cH_r^{\mathcal G} \otimes \cH_{\bar r}^{\mathcal G}$.
This procedure therefore amounts to replacing these data in every other $\a$-sector by nearby data.  The theorem below bounds the resulting change in the JLMS flow.\footnote{Since the total probability $p_{\mathcal G}$ on this group of windows must be preserved, it is generally larger than the original probability $p_\a$ for the $\a$-sector chosen as the representative. This difference in the resulting modular Hamiltonians is absorbed by `renormalizing' the more coarse-grained area $A_{cg}^{\mathcal G} = A_{cg}^\a + 4G \log (p_{\mathcal G}/ p_\a)$ with $\a$ denoting the chosen sector; the difference between this $p_\a$ and the $p_{\a'}$ for a nearby $\a'$-sector is small due to equal area-window width, and thus only causes a small error in the JLMS flow according to \er{eq:unitaryflowdifference}.}

Moreover, viewing the result as a relation between JLMS flows with more fine-grained area windows and more coarse-grained ones, we may try to take a limit where the widths of fine-grained area windows go to zero in order to recover the continuous geometric area operator $\hat A$ (while still keeping the coarse-grained windows appropriately small). Therefore, we may view the result here as evidence that this continuous limit of the JLMS flow is approximately equal to the coarse-grained JLMS flow, and due to \er{eq:modflowisJLMS}, it is also approximately equal to the boundary modular flow (a statement that we discussed in section~\ref{sec:JLMSflow}).

When defining an encoding map for each new small code $\cH_r^{\mathcal G} \otimes \cH_{\bar r}^{\mathcal G}$ and assembling these codes into a new large code, we may wish to ensure that the coarse-grained version of $|\psi\rangle$ still gets encoded as the original boundary state $|\tilde\psi\rangle$ (so that the boundary modular flow it defines remains unchanged). This can be achieved by using a modified encoding map that corrects the small error caused by the coarse-graining, similar to \er{eq:ModMa} and \er{eq:ModVa}.

\begin{theorem}[Sector coarse-graining bound]
\label{thm:sectorcoarsegraining}
For $\a=1,2$, let $\rho^\a$ be a state and $\t^\a=|\phi^\a\rangle\langle\phi^\a|$ be a pure state on $\cH_r\otimes\cH_{\bar r}$.  Suppose that $\rho^\a$ is relatively log-stable with respect to $\t^\a$ in the sense of \eqref{eq:relativelogstabilitybulk}, with a common $\epsilon_{tail}$.  Set
\begin{equation}
\label{eq:cgdistances}
\Delta_\rho:=\|\rho_r^1-\rho_r^2\|_1,
\qquad
\Delta_\phi:=\bigl\||\phi^1\rangle-|\phi^2\rangle\bigr\|,
\qquad
\Delta_A:=|A_{cg}^1-A_{cg}^2|,
\end{equation}
where $A_{cg}^\a$ is real.  For fixed $s\in\mathbb R$, define
\begin{equation}
\label{eq:cgflow}
U_\a(s):=e^{-is\fr{A_{cg}^\a}{4G}}(\rho_r^\a)^{is}.
\end{equation}
Then we have
\begin{equation}
\label{eq:cgflowreplace}
\Bigl\|U_1(s)|\phi^1\rangle-U_2(s)|\phi^2\rangle\Bigr\|
\leq
\Delta_\phi
+\sqrt{2C_s}\left(\frac{\Delta_\rho}{\epsilon_{tail}}\right)^{1/4}
+|s|\fr{\Delta_A}{4G},
\end{equation}
with $C_s:=1+\sqrt{1+4s^2}$.
\end{theorem}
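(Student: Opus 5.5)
We will bound the left-hand side of \eqref{eq:cgflowreplace} with the triangle inequality, changing one ingredient at a time:
\begin{align*}
U_1(s)|\phi^1\rangle-U_2(s)|\phi^2\rangle
&= U_1(s)\bigl(|\phi^1\rangle-|\phi^2\rangle\bigr)
+\Bigl(e^{-is\frac{A_{cg}^1}{4G}}-e^{-is\frac{A_{cg}^2}{4G}}\Bigr)(\rho_r^1)^{is}|\phi^2\rangle\\
&\quad+e^{-is\frac{A_{cg}^2}{4G}}\bigl((\rho_r^1)^{is}-(\rho_r^2)^{is}\bigr)|\phi^2\rangle .
\end{align*}
The first term has norm exactly $\Delta_\phi$, because $U_1(s)$ is unitary (using the unitary extension of footnote~\ref{foot:extension}). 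The second term has norm $|e^{-is\Delta A_{cg}/4G}-1|\leq |s|\Delta_A/4G$, since $|e^{ix}-1|\leq|x|$. So everything reduces to bounding the modular-flow difference $\|((\rho_r^1)^{is}-(\rho_r^2)^{is})|\phi\rangle\|$.

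The vector $|\phi\rangle$ in this bound should be chosen so that the log-stability hypothesis applies. The cleaner option is to make the split symmetric and place $|\phi^1\rangle$ next to the flow difference, using relative log-stability of $\rho^1$ with respect to $\t^1$. We also expect to need $\rho^2$ with respect to $\t^2$, which is why a common $\epsilon_{tail}$ for both sectors is assumed. Write $\sigma_r:=\tr_{\bar r}|\phi\rangle\langle\phi|$. Because $(\rho_r^\a)^{is}$ acts only on $r$,
\[
\|D|\phi\rangle\|^2=\tr(\sigma_r D^\dagger D),\qquad D:=(\rho_r^1)^{is}-(\rho_r^2)^{is},
\]
and expanding gives $\tr(\sigma_r D^\dagger D)=2-2\operatorname{Re}\tr\bigl(\sigma_r(\rho_r^1)^{-is}(\rho_r^2)^{is}\bigr)$. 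The target is $\tr(\sigma_r D^\dagger D)\leq 2C_s\sqrt{\Delta_\rho/\epsilon_{tail}}$, which produces the stated term $\sqrt{2C_s}(\Delta_\rho/\epsilon_{tail})^{1/4}$ after taking a square root.

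\textbf{Main obstacle.} The function $x\mapsto x^{is}$ is not Lipschitz near $x=0$, so a small trace distance $\Delta_\rho$ does not by itself control the flow difference. The plan uses log-stability as follows.
\begin{enumerate}
\item Write $(\rho_r^1)^{is}=(\rho_r^1)^{is-1/2}(\rho_r^1)^{1/2}$, or alternatively use Cauchy--Schwarz in the form
\[
|\tr(\sigma_r X)|\leq \sqrt{\tr(\sigma_r^2(\rho_r^1)^{-1})}\,\sqrt{\tr(\rho_r^1 X^\dagger X)}.
\]
The first factor is at most $\epsilon_{tail}^{-1/2}$ by \eqref{eq:relativelogstabilitybulk}.
\item For the second factor, $\tr(\rho_r^1 D^\dagger D)$, the state is $\rho_r^1$ itself, so the weights $\rho$ tame the small-eigenvalue region. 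We will bound it through an integral representation of $x^{is}$, or through a resolvent formula
\[
x^{is}=\frac{\sin(\pi i s)}{\pi}\int_0^\infty t^{is-1}\frac{x}{x+t}\,dt
\]
suitably regularized, or via the unitary-flow difference formula \eqref{eq:unitaryflowdifference} applied to $\log\rho$ weighted by $\rho$. The goal is a bound of the form $C_s\|\rho_r^1-\rho_r^2\|_1$.
\item The constant $C_s=1+\sqrt{1+4s^2}$ suggests that the final step optimizes a quadratic inequality $y^2\leq a+b|s|y$. We expect the dependence on $s$ to enter through a commutator-type estimate $\|[(\rho)^{is},\cdot]\|$ with a factor $|s|$, after which solving the quadratic yields $C_s$.
\end{enumerate}

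If the symmetric bookkeeping fails, the fallback is to apply Step~1 with $\sigma_r^1$ for the $(\rho^1)^{-1}$ factor and with $\sigma_r^2$ for the $(\rho^2)^{-1}$ factor, and to absorb the cross term $\|\sigma^1-\sigma^2\|\lesssim\Delta_\phi$. This would cost at most an additional $O(\Delta_\phi)$ term in the bound.

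The remaining steps are routine: square roots of the estimates, followed by assembling the three triangle-inequality terms.
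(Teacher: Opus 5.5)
Your outer bookkeeping matches the paper's. You peel off $\Delta_\phi$ and the phase term $|s|\Delta_A/4G$ with the triangle inequality. You then use Cauchy--Schwarz against $\varrho_1^{1/2}$ (writing $\varrho_\alpha:=\rho_r^\alpha$), so that relative log-stability of $\rho^1$ with respect to $\tau^1$ supplies $\|\varrho_1^{-1/2}\tau_r^1\|_2\le\epsilon_{tail}^{-1/2}$.

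The gap is the one nontrivial step, your Step~2. You never bound the $\varrho_1$-weighted flow difference in terms of $\Delta_\rho$, and the routes you list do not get there:
\begin{itemize}
\item The resolvent representation of $x^{a}$ requires $0<\operatorname{Re}a<1$, and at $a=is$ it diverges as $t\to0$.
\item The Duhamel formula \eqref{eq:unitaryflowdifference} would require controlling $\log\varrho_1-\log\varrho_2$. This is not bounded by $\|\varrho_1-\varrho_2\|_1$, since nothing prevents $\varrho_2$ from having arbitrarily small eigenvalues where $\varrho_1$ does not. In that formula the weight $\varrho_1^{1/2}$ is also separated from the log difference by $\varrho_2^{it}$.
\item Splitting $(\rho_r^1)^{is}=(\rho_r^1)^{is-1/2}(\rho_r^1)^{1/2}$ goes in the wrong direction, because it produces an unbounded factor.
\item Your guess that $C_s$ comes from solving a quadratic inequality via a commutator estimate is not where it comes from.
\end{itemize}

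The missing idea is to absorb the weight into the flow and treat the result as a function of $\varrho^{1/2}$ rather than of $\varrho$:
\[
(\varrho_1^{is}-\varrho_2^{is})\varrho_1^{1/2}=\bigl[f(\varrho_1^{1/2})-f(\varrho_2^{1/2})\bigr]+\varrho_2^{is}\bigl(\varrho_2^{1/2}-\varrho_1^{1/2}\bigr),\qquad f(x):=x^{1+2is}.
\]
Unlike $x^{is}$ or $x^{1/2+is}$, the function $f$ is Lipschitz on $[0,\infty)$, with constant $|1+2is|=\sqrt{1+4s^2}$. Three ingredients then finish the estimate:
\begin{itemize}
\item the Hilbert--Schmidt Lipschitz bound $\|f(X)-f(Y)\|_2\le\operatorname{Lip}(f)\|X-Y\|_2$ for self-adjoint $X,Y$ (in finite dimensions, compare matrix elements between the two eigenbases);
\item $\|\varrho_2^{is}\|_\infty\le1$;
\item the Powers--St{\o}rmer inequality $\|\varrho_1^{1/2}-\varrho_2^{1/2}\|_2^2\le\|\varrho_1-\varrho_2\|_1$.
\end{itemize}
Together they give $\|(\varrho_1^{is}-\varrho_2^{is})\varrho_1^{1/2}\|_2\le C_s\Delta_\rho^{1/2}$, and this is exactly where $C_s=1+\sqrt{1+4s^2}$ comes from.

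Two bookkeeping fixes then complete your assembly:
\begin{itemize}
\item With $D:=\varrho_1^{is}-\varrho_2^{is}$ and $X=D^\dagger D$ in your Cauchy--Schwarz step, the second factor is $\|D^\dagger D\varrho_1^{1/2}\|_2\le2\|D\varrho_1^{1/2}\|_2$ (using $\|D\|_\infty\le2$), not $\tr(\varrho_1D^\dagger D)$. This is what yields $\|D|\phi^1\rangle\|^2\le2C_s\sqrt{\Delta_\rho/\epsilon_{tail}}$.
\item Correspondingly, your Step~2 target should be $\tr(\varrho_1D^\dagger D)\le C_s^2\Delta_\rho$, not $C_s\Delta_\rho$.
\end{itemize}
Only the log-stability of $\rho^1$ with respect to $\tau^1$ is needed. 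So your fallback is unnecessary, and it would in any case change the stated bound.
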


\begin{proof}
Let $\varrho_\a:=\rho_r^\a$.  
We first show
\begin{equation}
\label{eq:cgoperatorcontinuity}
\left\|(\varrho_1^{is}-\varrho_2^{is})\varrho_1^{1/2}\right\|_2
\leq C_s\Delta_\rho^{1/2}.
\end{equation}
Here $\|X\|_p:=\left[\tr(|X|^p)\right]^{1/p}$ is the Schatten-$p$ norm, with $|X|:=(X^\dagger X)^{1/2}$.  To see \eqref{eq:cgoperatorcontinuity}, note that
\begin{align}
\left\|(\varrho_1^{is}-\varrho_2^{is})\varrho_1^{1/2}\right\|_2
&\leq \left\|(\varrho_1^{1/2})^{1+2is}-(\varrho_2^{1/2})^{1+2is}\right\|_2
+ \left\|\varrho_2^{is}(\varrho_2^{1/2}-\varrho_1^{1/2})\right\|_2 \\
&\leq \sqrt{1+4s^2} \left\|\varrho_1^{1/2}- \varrho_2^{1/2}\right\|_2
+ \left\|\varrho_1^{1/2}-\varrho_2^{1/2}\right\|_2, \label{eq:cgLipschitzestimate}
\end{align}
where in the second line we used $\|\varrho_2^{is}\|_\infty\leq1$ and the Schatten-$2$ Lipschitz functional-calculus bound \cite{BirmanSolomyak2003}
\begin{equation}
\left\|f(\varrho_1^{1/2})-f(\varrho_2^{1/2})\right\|_2
\leq\operatorname{Lip}(f)\left\|\varrho_1^{1/2}-\varrho_2^{1/2}\right\|_2
\end{equation}
for $f(x):=x^{1+2is}$. Since $|f'(x)|=\sqrt{1+4s^2}$ for $x>0$, one has $\operatorname{Lip}(f)=\sqrt{1+4s^2}$.  Finally, the Powers--St{\o}rmer inequality \cite{PowersStormer1970}, applied to $\varrho_1$ and $\varrho_2$, gives
\begin{equation}
\lt\|\varrho_1^{1/2}-\varrho_2^{1/2}\rt\|_2^2
\leq\|\varrho_1-\varrho_2\|_1=\Delta_\rho.
\end{equation}
This together with \eqref{eq:cgLipschitzestimate} proves \eqref{eq:cgoperatorcontinuity}.

We now use \eqref{eq:cgoperatorcontinuity} to find
\begin{align}
\Bigl\|(\varrho_1^{is}-\varrho_2^{is})|\phi^1\rangle\Bigr\|^2
&= \tr_r\!\left[(\varrho_1^{is}-\varrho_2^{is})^\dagger
(\varrho_1^{is}-\varrho_2^{is})\, \t_r^1 \right]\\
&\leq \left\|(\varrho_1^{is}-\varrho_2^{is})^\dagger
(\varrho_1^{is}-\varrho_2^{is}) \varrho_1^{1/2}\right\|_2
\left\|\varrho_1^{-1/2}\t_r^1\right\|_2\\
&\leq 2\epsilon_{tail}^{-1/2}
\left\|(\varrho_1^{is}-\varrho_2^{is})\varrho_1^{1/2}\right\|_2
\leq2C_s\sqrt{\frac{\Delta_\rho}{\epsilon_{tail}}},
\end{align}
where in the final line we used $\|(\varrho_1^{is}-\varrho_2^{is})^\dagger\|_\infty \leq 2$ as well as $\|\varrho_1^{-1/2}\t_r^1\|_2^2\leq\epsilon_{tail}^{-1}$ from relative log-stability \eqref{eq:relativelogstabilitybulk} (see Lemma~8 of Ref.~\cite{JLMS}).

Taking the square root gives the middle term in \eqref{eq:cgflowreplace}.  The triangle inequality adds $\Delta_\phi$ when $|\phi^1\rangle$ is replaced by $|\phi^2\rangle$, while the difference of the $A_{cg}^\a$ phases is at most $|s|\Delta_A/(4G)$ and causes an error at most of this amount according to \er{eq:unitaryflowdifference}.  This proves \eqref{eq:cgflowreplace}.
\end{proof}

\addcontentsline{toc}{section}{References}
\bibliographystyle{JHEP}
\bibliography{references}

@article{Kuchar:1994zk,
    author = "Kuchar, Karel V.",
    title = "{Geometrodynamics of Schwarzschild black holes}",
    eprint = "gr-qc/9403003",
    archivePrefix = "arXiv",
    reportNumber = "UU-REL-94-3-1",
    doi = "10.1103/PhysRevD.50.3961",
    journal = "Phys. Rev. D",
    volume = "50",
    pages = "3961--3981",
    year = "1994"
}

@article{Kaplan:2022orm,
    author = "Kaplan, Molly and Marolf, Donald",
    title = "{The action of HRT-areas as operators in semiclassical gravity}",
    eprint = "2203.04270",
    archivePrefix = "arXiv",
    primaryClass = "hep-th",
    doi = "10.1007/JHEP08(2022)102",
    journal = "JHEP",
    volume = "08",
    pages = "102",
    year = "2022"
}

@article{Harlow:2018tqv,
    author = "Harlow, Daniel and Jafferis, Daniel",
    title = "{The Factorization Problem in Jackiw-Teitelboim Gravity}",
    eprint = "1804.01081",
    archivePrefix = "arXiv",
    primaryClass = "hep-th",
    doi = "10.1007/JHEP02(2020)177",
    journal = "JHEP",
    volume = "02",
    pages = "177",
    year = "2020"
}

@article{Maldacena:2001kr,
    author = "Maldacena, Juan Martin",
    title = "{Eternal black holes in anti-de Sitter}",
    eprint = "hep-th/0106112",
    archivePrefix = "arXiv",
    reportNumber = "NSF-ITP-01-59",
    doi = "10.1088/1126-6708/2003/04/021",
    journal = "JHEP",
    volume = "04",
    pages = "021",
    year = "2003"
}

@article{Thiemann:1992jj,
    author = "Thiemann, T. and Kastrup, H. A.",
    title = "{Canonical quantization of spherically symmetric gravity in Ashtekar's selfdual representation}",
    eprint = "gr-qc/9310012",
    archivePrefix = "arXiv",
    reportNumber = "PITHA-92-23",
    doi = "10.1016/0550-3213(93)90623-W",
    journal = "Nucl. Phys. B",
    volume = "399",
    pages = "211--258",
    year = "1993"
}

@article{Kastrup:1993br,
    author = "Kastrup, H. A. and Thiemann, T.",
    title = "{Spherically symmetric gravity as a completely integrable system}",
    eprint = "gr-qc/9401032",
    archivePrefix = "arXiv",
    reportNumber = "PITHA-93-35",
    doi = "10.1016/0550-3213(94)90293-3",
    journal = "Nucl. Phys. B",
    volume = "425",
    pages = "665--686",
    year = "1994"
}

@article{Almheiri:2019hni,
    author = "Almheiri, Ahmed and Mahajan, Raghu and Maldacena, Juan and Zhao, Ying",
    title = "{The Page curve of Hawking radiation from semiclassical geometry}",
    eprint = "1908.10996",
    archivePrefix = "arXiv",
    primaryClass = "hep-th",
    doi = "10.1007/JHEP03(2020)149",
    journal = "JHEP",
    volume = "03",
    pages = "149",
    year = "2020"
}

@article{Kim:2021gjx,
    author = "Kim, Isaac H. and Shi, Bowen and Kato, Kohtaro and Albert, Victor V.",
    title = "{Chiral Central Charge from a Single Bulk Wave Function}",
    eprint = "2110.06932",
    archivePrefix = "arXiv",
    primaryClass = "quant-ph",
    doi = "10.1103/PhysRevLett.128.176402",
    journal = "Phys. Rev. Lett.",
    volume = "128",
    number = "17",
    pages = "176402",
    year = "2022"
}

@article{Marolf:2020vsi,
    author = "Marolf, Donald and Wang, Shannon and Wang, Zhencheng",
    title = "{Probing phase transitions of holographic entanglement entropy with fixed area states}",
    eprint = "2006.10089",
    archivePrefix = "arXiv",
    primaryClass = "hep-th",
    doi = "10.1007/JHEP12(2020)084",
    journal = "JHEP",
    volume = "12",
    pages = "084",
    year = "2020"
}

@article{Akers:2020pmf,
    author = "Akers, Chris and Penington, Geoff",
    title = "{Leading order corrections to the quantum extremal surface prescription}",
    eprint = "2008.03319",
    archivePrefix = "arXiv",
    primaryClass = "hep-th",
    doi = "10.1007/JHEP04(2021)062",
    journal = "JHEP",
    volume = "04",
    pages = "062",
    year = "2021"
}

@article{Jafferis:2014lza,
    author = "Jafferis, Daniel L. and Suh, S. Josephine",
    title = "{The Gravity Duals of Modular Hamiltonians}",
    eprint = "1412.8465",
    archivePrefix = "arXiv",
    primaryClass = "hep-th",
    reportNumber = "MIT-CTP-4611",
    doi = "10.1007/JHEP09(2016)068",
    journal = "JHEP",
    volume = "09",
    pages = "068",
    year = "2016"
}

@article{Bousso:2019dxk,
    author = "Bousso, Raphael and Chandrasekaran, Venkatesa and Shahbazi-Moghaddam, Arvin",
    title = "{From black hole entropy to energy-minimizing states in QFT}",
    eprint = "1906.05299",
    archivePrefix = "arXiv",
    primaryClass = "hep-th",
    doi = "10.1103/PhysRevD.101.046001",
    journal = "Phys. Rev. D",
    volume = "101",
    number = "4",
    pages = "046001",
    year = "2020"
}

@article{Ryu:2006ef,
    author = "Ryu, Shinsei and Takayanagi, Tadashi",
    title = "{Aspects of Holographic Entanglement Entropy}",
    eprint = "hep-th/0605073",
    archivePrefix = "arXiv",
    reportNumber = "NSF-KITP-06-31, KUNS-2021",
    doi = "10.1088/1126-6708/2006/08/045",
    journal = "JHEP",
    volume = "08",
    pages = "045",
    year = "2006"
}

@article{Hubeny:2007xt,
    author = "Hubeny, Veronika E. and Rangamani, Mukund and Takayanagi, Tadashi",
    title = "{A Covariant holographic entanglement entropy proposal}",
    eprint = "0705.0016",
    archivePrefix = "arXiv",
    primaryClass = "hep-th",
    reportNumber = "DCPT-07-13, KUNS-2069",
    doi = "10.1088/1126-6708/2007/07/062",
    journal = "JHEP",
    volume = "07",
    pages = "062",
    year = "2007"
}

@article{Penington:2019npb,
    author = "Penington, Geoffrey",
    title = "{Entanglement Wedge Reconstruction and the Information Paradox}",
    eprint = "1905.08255",
    archivePrefix = "arXiv",
    primaryClass = "hep-th",
    doi = "10.1007/JHEP09(2020)002",
    journal = "JHEP",
    volume = "09",
    pages = "002",
    year = "2020"
}

@article{Chen:2019gbt,
    author = "Chen, Chi-Fang and Penington, Geoffrey and Salton, Grant",
    title = "{Entanglement Wedge Reconstruction using the Petz Map}",
    eprint = "1902.02844",
    archivePrefix = "arXiv",
    primaryClass = "hep-th",
    doi = "10.1007/JHEP01(2020)168",
    journal = "JHEP",
    volume = "01",
    pages = "168",
    year = "2020"
}

@article{Dong:2017xht,
    author = "Dong, Xi and Lewkowycz, Aitor",
    title = "{Entropy, Extremality, Euclidean Variations, and the Equations of Motion}",
    eprint = "1705.08453",
    archivePrefix = "arXiv",
    primaryClass = "hep-th",
    doi = "10.1007/JHEP01(2018)081",
    journal = "JHEP",
    volume = "01",
    pages = "081",
    year = "2018"
}

@article{Dong:2018seb,
    author = "Dong, Xi and Harlow, Daniel and Marolf, Donald",
    title = "{Flat entanglement spectra in fixed-area states of quantum gravity}",
    eprint = "1811.05382",
    archivePrefix = "arXiv",
    primaryClass = "hep-th",
    doi = "10.1007/JHEP10(2019)240",
    journal = "JHEP",
    volume = "10",
    pages = "240",
    year = "2019"
}

@article{Akers:2018fow,
    author = "Akers, Chris and Rath, Pratik",
    title = "{Holographic Renyi Entropy from Quantum Error Correction}",
    eprint = "1811.05171",
    archivePrefix = "arXiv",
    primaryClass = "hep-th",
    doi = "10.1007/JHEP05(2019)052",
    journal = "JHEP",
    volume = "05",
    pages = "052",
    year = "2019"
}

@article{Jafferis:2015del,
    author = "Jafferis, Daniel L. and Lewkowycz, Aitor and Maldacena, Juan and Suh, S. Josephine",
    title = "{Relative entropy equals bulk relative entropy}",
    eprint = "1512.06431",
    archivePrefix = "arXiv",
    primaryClass = "hep-th",
    reportNumber = "NSF-KITP-15-162",
    doi = "10.1007/JHEP06(2016)004",
    journal = "JHEP",
    volume = "06",
    pages = "004",
    year = "2016"
}

@article{Lewkowycz:2013nqa,
    author = "Lewkowycz, Aitor and Maldacena, Juan",
    title = "{Generalized gravitational entropy}",
    eprint = "1304.4926",
    archivePrefix = "arXiv",
    primaryClass = "hep-th",
    doi = "10.1007/JHEP08(2013)090",
    journal = "JHEP",
    volume = "08",
    pages = "090",
    year = "2013"
}

@article{Almheiri:2014lwa,
    author = "Almheiri, Ahmed and Dong, Xi and Harlow, Daniel",
    title = "{Bulk Locality and Quantum Error Correction in AdS/CFT}",
    eprint = "1411.7041",
    archivePrefix = "arXiv",
    primaryClass = "hep-th",
    reportNumber = "SU-ITP-14-30",
    doi = "10.1007/JHEP04(2015)163",
    journal = "JHEP",
    volume = "04",
    pages = "163",
    year = "2015"
}

@article{Chen:2018rgz,
    author = "Chen, Yiming and Dong, Xi and Lewkowycz, Aitor and Qi, Xiao-Liang",
    title = "{Modular Flow as a Disentangler}",
    eprint = "1806.09622",
    archivePrefix = "arXiv",
    primaryClass = "hep-th",
    doi = "10.1007/JHEP12(2018)083",
    journal = "JHEP",
    volume = "12",
    pages = "083",
    year = "2018"
}

@article{Dong:2019piw,
    author = "Dong, Xi and Marolf, Donald",
    title = "{One-loop universality of holographic codes}",
    eprint = "1910.06329",
    archivePrefix = "arXiv",
    primaryClass = "hep-th",
    doi = "10.1007/JHEP03(2020)191",
    journal = "JHEP",
    volume = "03",
    pages = "191",
    year = "2020"
}

@article{Dong:2013qoa,
    author = "Dong, Xi",
    title = "{Holographic Entanglement Entropy for General Higher Derivative Gravity}",
    eprint = "1310.5713",
    archivePrefix = "arXiv",
    primaryClass = "hep-th",
    reportNumber = "SU-ITP-13-21",
    doi = "10.1007/JHEP01(2014)044",
    journal = "JHEP",
    volume = "01",
    pages = "044",
    year = "2014"
}

@article{Zou:2022nuj,
    author = "Zou, Yijian and Shi, Bowen and Sorce, Jonathan and Lim, Ian T. and Kim, Isaac H.",
    title = "{Modular Commutators in Conformal Field Theory}",
    eprint = "2206.00027",
    archivePrefix = "arXiv",
    primaryClass = "cond-mat.str-el",
    doi = "10.1103/PhysRevLett.129.260402",
    journal = "Phys. Rev. Lett.",
    volume = "129",
    number = "26",
    pages = "260402",
    year = "2022"
}

@article{Kim:2021tse,
    author = "Kim, Isaac H. and Shi, Bowen and Kato, Kohtaro and Albert, Victor V.",
    title = "{Modular commutator in gapped quantum many-body systems}",
    eprint = "2110.10400",
    archivePrefix = "arXiv",
    primaryClass = "quant-ph",
    doi = "10.1103/PhysRevB.106.075147",
    journal = "Phys. Rev. B",
    volume = "106",
    number = "7",
    pages = "075147",
    year = "2022"
}

@article{Kudler-Flam:2022jwd,
    author = "Kudler-Flam, Jonah and Rath, Pratik",
    title = "{Large and small corrections to the JLMS Formula from replica wormholes}",
    eprint = "2203.11954",
    archivePrefix = "arXiv",
    primaryClass = "hep-th",
    doi = "10.1007/JHEP08(2022)189",
    journal = "JHEP",
    volume = "08",
    pages = "189",
    year = "2022"
}

@article{Dong:2025orj,
    author = "Dong, Xi and Marolf, Donald and Rath, Pratik",
    title = "{Geometric entropies and their Hamiltonian flows}",
    eprint = "2501.12438",
    archivePrefix = "arXiv",
    primaryClass = "hep-th",
    doi = "10.1007/JHEP05(2025)085",
    journal = "JHEP",
    volume = "05",
    pages = "085",
    year = "2025"
}

@article{JLMS,
   author = "Dong, Xi and Marolf, Donald and Rath, Pratik",
    title = "{The JLMS formula in a large code with approximate error correction}",
    eprint = "2601.00442",
    archivePrefix = "arXiv",
    primaryClass = "hep-th",
    month = "1",
    year = "2026"
}

@article{Kaplan:2023oaj,
    author = "Kaplan, Molly",
    title = "{The action of geometric entropy in topologically massive gravity}",
    eprint = "2308.09763",
    archivePrefix = "arXiv",
    primaryClass = "hep-th",
    doi = "10.1007/JHEP12(2023)106",
    journal = "JHEP",
    volume = "12",
    pages = "106",
    year = "2023"
}

@article{Susskind:1995qc,
    author = "Susskind, Leonard and Uglum, John",
    title = "{String physics and black holes}",
    eprint = "hep-th/9511227",
    archivePrefix = "arXiv",
    reportNumber = "SU-ITP-95-31",
    doi = "10.1016/0920-5632(95)00630-3",
    journal = "Nucl. Phys. B Proc. Suppl.",
    volume = "45BC",
    pages = "115--134",
    year = "1996"
}

@article{Carlip:1993sa,
    author = "Carlip, Steven and Teitelboim, Claudio",
    title = "{The Off-shell black hole}",
    eprint = "gr-qc/9312002",
    archivePrefix = "arXiv",
    reportNumber = "IASSNS-HEP-93-84, UCD-93-34",
    doi = "10.1088/0264-9381/12/7/011",
    journal = "Class. Quant. Grav.",
    volume = "12",
    pages = "1699--1704",
    year = "1995"
}

@article{BirmanSolomyak2003,
  author  = {Birman, Mikhail Sh. and Solomyak, Michael Z.},
  title   = {Double Operator Integrals in a Hilbert Space},
  journal = {Integral Equations and Operator Theory},
  volume  = {47},
  number  = {2},
  pages   = {131--168},
  year    = {2003},
  doi     = {10.1007/s00020-003-1157-8}
}

@article{PowersStormer1970,
  author  = {Powers, Robert T. and St{\o}rmer, Erling},
  title   = {Free States of the Canonical Anticommutation Relations},
  journal = {Communications in Mathematical Physics},
  volume  = {16},
  number  = {1},
  pages   = {1--33},
  year    = {1970},
  doi     = {10.1007/BF01645492}
}

@article{RG,
    author = "Dong, Xi and Marolf, Donald and Rath, Pratik",
    title = "{Holographic codes and bulk RG flows}",
    eprint = "2509.21438",
    archivePrefix = "arXiv",
    primaryClass = "hep-th",
    month = "9",
    year = "2025"
}

@article{Ceyhan:2018zfg,
    author = "Ceyhan, Fikret and Faulkner, Thomas",
    title = "{Recovering the QNEC from the ANEC}",
    eprint = "1812.04683",
    archivePrefix = "arXiv",
    primaryClass = "hep-th",
    doi = "10.1007/s00220-020-03751-y",
    journal = "Commun. Math. Phys.",
    volume = "377",
    number = "2",
    pages = "999--1045",
    year = "2020"
}

@article{Chua:2023srl,
    author = "Chua, Wan Zhen and Hartman, Thomas",
    title = "{Black hole wavefunctions and microcanonical states}",
    eprint = "2309.05041",
    archivePrefix = "arXiv",
    primaryClass = "hep-th",
    doi = "10.1007/JHEP06(2024)054",
    journal = "JHEP",
    volume = "06",
    pages = "054",
    year = "2024"
}

@article{Lewkowycz:2018sgn,
    author = "Lewkowycz, Aitor and Parrikar, Onkar",
    title = "{The holographic shape of entanglement and Einstein\textquoteright{}s equations}",
    eprint = "1802.10103",
    archivePrefix = "arXiv",
    primaryClass = "hep-th",
    doi = "10.1007/JHEP05(2018)147",
    journal = "JHEP",
    volume = "05",
    pages = "147",
    year = "2018"
}

@article{Dong:2016eik,
    author = "Dong, Xi and Harlow, Daniel and Wall, Aron C.",
    title = "{Reconstruction of Bulk Operators within the Entanglement Wedge in Gauge-Gravity Duality}",
    eprint = "1601.05416",
    archivePrefix = "arXiv",
    primaryClass = "hep-th",
    reportNumber = "NSF-KITP-16-005",
    doi = "10.1103/PhysRevLett.117.021601",
    journal = "Phys. Rev. Lett.",
    volume = "117",
    number = "2",
    pages = "021601",
    year = "2016"
}

@article{Ryu:2006bv,
    author = "Ryu, Shinsei and Takayanagi, Tadashi",
    title = "{Holographic derivation of entanglement entropy from AdS/CFT}",
    eprint = "hep-th/0603001",
    archivePrefix = "arXiv",
    reportNumber = "NSF-KITP-06-11",
    doi = "10.1103/PhysRevLett.96.181602",
    journal = "Phys. Rev. Lett.",
    volume = "96",
    pages = "181602",
    year = "2006"
}

@article{Cotler:2017erl,
    author = "Cotler, Jordan and Hayden, Patrick and Penington, Geoffrey and Salton, Grant and Swingle, Brian and Walter, Michael",
    title = "{Entanglement Wedge Reconstruction via Universal Recovery Channels}",
    eprint = "1704.05839",
    archivePrefix = "arXiv",
    primaryClass = "hep-th",
    doi = "10.1103/PhysRevX.9.031011",
    journal = "Phys. Rev. X",
    volume = "9",
    number = "3",
    pages = "031011",
    year = "2019"
}

@article{Faulkner:2017vdd,
    author = "Faulkner, Thomas and Lewkowycz, Aitor",
    title = "{Bulk locality from modular flow}",
    eprint = "1704.05464",
    archivePrefix = "arXiv",
    primaryClass = "hep-th",
    doi = "10.1007/JHEP07(2017)151",
    journal = "JHEP",
    volume = "07",
    pages = "151",
    year = "2017"
}

@article{Bousso:2020yxi,
    author = "Bousso, Raphael and Chandrasekaran, Venkatesa and Rath, Pratik and Shahbazi-Moghaddam, Arvin",
    title = "{Gravity dual of Connes cocycle flow}",
    eprint = "2007.00230",
    archivePrefix = "arXiv",
    primaryClass = "hep-th",
    doi = "10.1103/PhysRevD.102.066008",
    journal = "Phys. Rev. D",
    volume = "102",
    number = "6",
    pages = "066008",
    year = "2020"
}

@article{Faulkner:2018faa,
    author = "Faulkner, Thomas and Li, Min and Wang, Huajia",
    title = "{A modular toolkit for bulk reconstruction}",
    eprint = "1806.10560",
    archivePrefix = "arXiv",
    primaryClass = "hep-th",
    doi = "10.1007/JHEP04(2019)119",
    journal = "JHEP",
    volume = "04",
    pages = "119",
    year = "2019"
}

@article{Faulkner:2013ana,
    author = "Faulkner, Thomas and Lewkowycz, Aitor and Maldacena, Juan",
    title = "{Quantum corrections to holographic entanglement entropy}",
    eprint = "1307.2892",
    archivePrefix = "arXiv",
    primaryClass = "hep-th",
    doi = "10.1007/JHEP11(2013)074",
    journal = "JHEP",
    volume = "11",
    pages = "074",
    year = "2013"
}

@article{Engelhardt:2014gca,
    author = "Engelhardt, Netta and Wall, Aron C.",
    title = "{Quantum Extremal Surfaces: Holographic Entanglement Entropy beyond the Classical Regime}",
    eprint = "1408.3203",
    archivePrefix = "arXiv",
    primaryClass = "hep-th",
    doi = "10.1007/JHEP01(2015)073",
    journal = "JHEP",
    volume = "01",
    pages = "073",
    year = "2015"
}

@article{Harlow:2016vwg,
    author = "Harlow, Daniel",
    title = "{The Ryu\textendash{}Takayanagi Formula from Quantum Error Correction}",
    eprint = "1607.03901",
    archivePrefix = "arXiv",
    primaryClass = "hep-th",
    doi = "10.1007/s00220-017-2904-z",
    journal = "Commun. Math. Phys.",
    volume = "354",
    number = "3",
    pages = "865--912",
    year = "2017"
}

%

% %~~~~~~~~~~~~~~~~~~~~~~~~~~~~~~~~~~~~~~~~~~~~~~~~~~~~~~~~~~~~~~~~~~~~~
\end{document}